 \newtheorem*{lem*}{Lemma}
 \newtheorem*{thm*}{Theorem}
 \newtheorem*{obs*}{Observation}
\Crefname{observation}{Observation}{Observations}
\begin{document}


%
\title{Temporal Orienteering with Changing Fuel Costs}
%
%
\author{Timothée Corsini\inst{1}\orcidID{0000-0003-1055-5627} \and
Jessica Enright\inst{2}\orcidID{0000-0002-0266-3292} \and
Laura Larios-Jones\inst{2}\orcidID{0000-0003-3322-0176} \and Kitty Meeks\inst{2}\orcidID{0000-0001-5299-3073}}
\authorrunning{T. Corsini et al.}
%
\institute{LaBRI, CNRS, Univ. Bordeaux, Bordeaux INP, France \email{timothee.corsini@labri.fr} \and
School of Computing Science, University of Glasgow, UK
\email{\{firstname\}.\{surname\}@glasgow.ac.uk}}
\maketitle              

\begin{abstract}
The problem Orienteering asks whether there exists a walk which visits a number of sites without exceeding some fuel budget. In the variant of the problem we consider, the cost of each edge in the walk is dependent on the time we depart one endpoint and the time we arrive at the other endpoint. This mirrors applications such as travel between orbiting objects where fuel costs are dependent on both the departure time and the length of time spent travelling.
In defining this problem, we introduce a natural generalisation of the standard notion of temporal graphs: the pair consisting of the graph of the sites and a cost function, in which costs as well as shortest travel times between pairs of objects change over time. We believe this model is likely to be of independent interest.
The problem of deciding whether a stated goal is feasible is easily seen to be NP-complete; we investigate three different ways to restrict the input which lead to efficient algorithms. These include the number of times an edge can be used, an analogue of vertex-interval-membership width, and the number of sites to be visited.

\keywords{Orienteering \and TSP  \and Temporal graphs \and Parameterized complexity \and Color coding.}
\end{abstract}

\section{Introduction}
The \textsc{Orienteering} problem, also known as \textsc{Selective Travelling Salesman}~\cite{laporte_selective_1990} or the \textsc{Bank Robber}~\cite{awerbuch1995improved} problem, asks for a path in a graph which visits some number of vertices such that the sum of the costs on the edges of the path is at most a given budget. Note that, when the number of vertices to be visited is the same as the number of vertices in the graph, \textsc{Orienteering} becomes the classical \textsc{Travelling Salesman Problem} (TSP)~\cite{gavish_travelling_1978}. As with TSP, many versions of \textsc{Orienteering} have been studied (see~\cite{vansteenwegen_orienteering_2011} for a survey).

Our variation of the \textsc{Orienteering} problem is \textsc{Changing Cost Temporal Orienteering}. In this variant, we look for a walk which visits the designated number of vertices with fuel costs at most the given budget. The cost of a walk is given by the sum of the costs of the edges at the time they are traversed, where the cost of an edge is given by a function dependent on the edge, departure time at one endpoint, and arrival time at the other endpoint.
Note that looking for a walk rather than a path allows us to revisit vertices. Our variant of the problem is inspired by scheduling travel to objects in near-earth orbit~\cite{lopez-ibanez_asteroid_2022}. Since these objects are in orbit and the distances between them change over time, so too do the associated fuel costs. In addition, one may choose to expend more fuel to arrive at the destination faster, or take a cheaper and slower journey. Our model allows both waiting at a vertex until the journey to the next vertex is cheaper, and for making longer, but more fuel-efficient, journeys.

In order to describe this problem formally, we introduce a new notion of \emph{temporal cost graphs}, a generalisation of the well-studied notion of temporal graphs which allows for the cost of travelling between two vertices to change over time; we believe that this mathematical object will prove useful for the study of other problems.  We observe that our problem -- \textsc{Changing Cost Temporal Orienteering} -- generalises star exploration, known to be NP-complete even under strong restrictions \cite{AMSR21,bumpus2021edge}.  We therefore focus our attention on restricted cases in which we can hope to recover tractability.  Informally, we design efficient algorithms by restricting any one of the following: 
\begin{itemize}
    \item the number of times at which it is possible to either enter or leave a vertex with finite travel cost (Section \ref{sec:trees});
    \item the maximum, taken over all timesteps, of the number of vertices possible for the orienteer's current location, assuming they have only used finite fuel and are not permanently stranded on the current object (Section \ref{sec:vitw});
    \item the target number of objects we must visit (Section \ref{sec:k-small}).
\end{itemize}
We note that our second restriction (found in Section~\ref{sec:vitw}) is a generalisation of vertex-interval-membership width. In the interest of space, most proofs are deferred to an appendix.

\subsection{Related Work}
The \textsc{Orienteering} problem \cite{golden_orienteering_1987} is described by Vansteenwegan \cite{vansteenwegen_orienteering_2011} as a combination of the \textsc{Knapsack} Problem \cite{kellerer_knapsack_2004} and TSP. When the vertices are weighted, the problem has also been referred to as \textsc{Prize Collecting TSP}~\cite{balas1989prize} or \textsc{TSP with profits}~\cite{feillet_traveling_2005}. Note that we allow the walk to revisit vertices, however some variants of the problem do not allow this and instead require a tour of the visited vertices. Blum et al.~\cite{blum2007approximation} give a constant factor approximation algorithm for the \textsc{Orienteering} problem that demands a tour given a specified starting vertex. Similar methods are used by Chekuri et al.~\cite{chekuri2012improved} for the variant where vertices can be revisited to obtain a $(2+\varepsilon)$-approximation.

Fomin and Lingas~\cite{FL02} introduce the \textsc{Time-Dependent Orienteering} problem which, given a clique and a travel time function which depends on the current absolute time, asks for a tour that visits $k$ vertices before a deadline. In a survey of different notions of time-varying graphs, Casteigts et al.~\cite{casteigts2012time} gives a model which has a latency function over the edge-time pairs which describes how long it takes to cross an edge at a given time. This latency function operates in the same way as the travel time function used by Fomin and Lingas. They show that their generalisation of the \textsc{Orienteering} problem is $(2+\epsilon)-$approximable if the ratio between the maximum and minimum travel times is bounded by a constant.  Our problem is related in also computing on a graph and with a travel time function, but differs in allowing the fuel cost to be vary arbitrarily with the travel time, in explicitly using a different underlying graph, and allowing vertices to be revisited. As a result, their results cannot be easily adapted to our setting.

Other time-dependent variations of \textsc{Orienteering} are considered by Buchin et al.~\cite{buchin2025orienteering}. In their work, they search for a walk on a graph whose vertices are labelled with intervals at which each vertex can be visited. They show that the problem is weakly NP-hard when the underlying graph is a path and with unit costs on the edges. Buchin et al.~also consider a variant of the problem where edges, rather than vertices, are labelled with intervals in which they can be used. This is an analogue of \textsc{Orienteering} for temporal graphs. For this problem, they show tractability when the underlying graph is a path, and hardness when it is a tree. They extend some of their structural results to the original \textsc{Orienteering} problem.

Capturing our space-related motivation, López-Ibáñez et al.~\cite{lopez-ibanez_asteroid_2022} present the \textsc{Asteroid Routing} problem which asks for the optimal sequence for visiting a set of asteroids, starting from Earth’s orbit, in order to minimize both the cost and the time taken. They give four algorithms and compare their efficacy in an experimental study.

\subsection{Problem definition and basic observations}

A \emph{temporal graph} is usually defined as a pair $(G = (V,E), \lambda)$, where $G$ is a static graph and the function $\lambda : E \to 2^\tau$ attributes to each edge a set of time labels at which the edge is said to be \emph{active}.  To define our problem, we begin by introducing a generalisation of this notion in which each edge of the static graph has an associated traversal cost, which depends on both the departure and arrival times as well as the direction of travel (we include the arrival time as well as departure time as an input to the cost function as one can imagine a scenario in which the orienteer has a choice between a fast but costly route or a slower but more fuel-efficient trajectory departing at the same time). This also generalises the time-varying graph model given in the survey by Casteigts et al.~\cite{casteigts2012time}. This leads us to the following definition.

\begin{definition}[Temporal cost graph]
    A \emph{temporal cost graph} is a static graph $G$ paired with a cost function $F:V\times V\times [T]\times [T]\to \mathbb{N}_0 \cup \{\infty\}$ which takes an ordered tuple consisting of a starting vertex $v_1$, a destination vertex $v_2$, a departure time $t_1$ and an arrival time $t_2$ and assigns a fuel cost.  We require that $F(v_1,v_2,t_1,t_2) = \infty$ whenever $t_1 \geq t_2$, $F(v,v,t_1,t_2) = 0$ for all $v$ and any $t_1 < t_2$, and that $F(v_1,v_2,t_1,t_2) > 0$ whenever $v_1 \neq v_2$.
\end{definition}
\noindent
We assume throughout that temporal cost graphs are encoded in such a way (e.g.~using a hash table) that, given any tuple $(u,v,t_1,t_2)$, we can look up the value of $F(u,v,t_1,t_2)$ in constant time.

Notice that our definition of temporal cost graphs generalises the standard definition of temporal graphs.  Given a temporal graph $(G,\lambda)$, we can obtain an equivalent temporal cost graph by taking the same underlying graph $G$ and setting $F(u,v,t_1,t_2) = 1$ if and only if the edge $uv$ is active at time $t_1$, and $t_2 = t_1 + 1$. If this is not the case, the quadruple is assigned infinite cost by $F$.

We now define the key concept of a valid walk in a temporal cost graph. In a temporal cost graph generated from a standard temporal graph, this is equivalent to the notion of a strict temporal walk.

\begin{definition}[Valid walk]
    A \emph{valid walk} on a temporal cost graph $(G,F)$ is a sequence of quadruples $(u_1,v_1,t_1,t_1'),(u_2,v_2,t_2,t_2'),\dots,(u_{\ell},v_{\ell},t_{\ell},t_{\ell}')$ such that:
    \begin{itemize}
        \item for each $1 \le i < \ell$, $v_i = u_{i+1}$,
        \item for each $1 \le i \le \ell$, $t_i < t_i'$ and $F(u_i,v_i,t_i,t_i') < \infty$, and
        \item for each $1 \le i < \ell$, $t_i' \le t_{i+1}$.
    \end{itemize}
    We define the \emph{cost} of a valid walk $W$ to be $\sum_{(u_i,v_i,t_i,t_i') \in W} F(u_i,v_i,t_i,t_i')$.
\end{definition}

Note that, since each quadruple $(u_i,v_i,t_i,t'_i)$ in a valid walk is required to have finite cost, it must be the case that $t_i<t_i'$. Also note that we use the convention that staying at a vertex incurs no cost.

We can now give the formal definition of our problem.
\begin{framed}
\noindent
\textbf{\textsc{Changing Cost Temporal Orienteering (CCTO)}}\\
\textit{Input:} A temporal cost graph $(G,F)$, source vertex $v_s$, sink vertex $v_t$, a target $k \in \mathbb{N}$ and a budget $f \in \mathbb{N}$.\\
\textit{Question:} Is there a valid walk $W$ from $v_s$ to $v_t$ that visits at least $k$ distinct vertices and has cost at most $f$?
\end{framed}
For avoidance of doubt: we count the source and sink vertices in the number of vertices visited by a valid walk. We also assume without loss of generality that the underlying graph of the input temporal cost graph is connected.

We now define two important properties of temporal cost graphs which we shall refer to when designing our algorithms.  These are the analogues of the notions of lifetime and temporality, respectively, for temporal graphs.

\begin{definition}[Lifetime]
    The \emph{lifetime} $T$ of a temporal cost graph $(G,F)$ is the latest arrival time $t'$ of an edge such that there exist vertices $v$, $u$ and a departure time $t$ with $F(v,u,t,t')<\infty$.
\end{definition}

\begin{definition}[Maximum traversal number]
    Let $e = uv$ be an edge in $G$.  The \emph{maximum traversal number} of $e$ in the temporal cost graph $(G,F)$ is the largest integer $j$ such that there exists a sequence $(t_1,t_1'),(t_2,t_2'),\dots,(t_j,t_j')$ of pairs of times such that:
    \begin{itemize}
        \item for each $1 \le i \le j$, we have $\min \{F(u,v,t_i,t_i'),F(v,u,t_i,t_i')\} < \infty$, and
        \item for each $1 \le i \le j-1$, we have $t_i' \le t_{i+1}$.
    \end{itemize}
\end{definition}

\noindent
Crucially, the maximum traversal number of an edge $e$ in $(G,F)$ provides an upper bound on the number of times that $e$ can be used in any valid walk.

We conclude this section by observing that CCTO is NP-complete.  Recall that our definition of temporal cost graphs generalises the standard definition of temporal graphs so, if a problem is hard on temporal graphs, the analogous problem is hard on temporal cost graphs.  It is therefore easy to deduce NP-hardness of CCTO from the NP-hardness of the following problem on temporal graphs, which is known to be NP-hard even when every edge is active at no more than four timesteps \cite{AMSR21,bumpus2021edge}; here $S_n$ is a star with $n$ leaves, and our goal is to visit every leaf before returning to the centre.

\begin{framed}
\noindent
\textbf{\textsc{StarExp}}\\
\textbf{Input:} A temporal star $(S_n, \lambda)$ with centre $x$.\\
\textbf{Question:} Is there an ordering $v_1,\dots,v_n$ of the leaves of $S_n$ and a sequence $t_1,t_2,\dots,t_{2n}$ of times with $t_i < t_{i+1}$ for each $1 \le i \le 2n-1$ such that, for each $1 \le j \le n$, we have $t_{2j-1},t_{2j} \in \lambda(xv_j)$?
\end{framed}

Given the input $(S_n,\lambda)$ to an instance of \textsc{StarExp}, we can easily construct an equivalent instance of \textsc{CCTO}.  Suppose that the latest time at which any edge in $(S_n,\lambda)$ is active is $\tau$.  We define a cost function $F: V(S_n) \times V(S_n) \times [\tau+1] \times [\tau+1]$ by setting $F(u,v,t_1,t_2) = 1$ if $t_1 \in \lambda(uv)$ and $t_2 = t_1 + 1$, and $F(u,v,t_1,t_2) = \infty$ otherwise.   It is then immediate that $(S_n,x,x,F,n+1,2n)$ is a yes-instance for \textsc{CCTO} if and only if $(S_n,\lambda)$ is a yes-instance for \textsc{StarExp} -- recall that $S_n$ has $n+1$ vertices.  We further note that \textsc{CCTO} is clearly in NP (with the sequence of edges traversed together with the departure and arrival times acting as a certificate) so we conclude that \textsc{CCTO} is NP-complete.

It is natural to ask -- especially as we will show in Section \ref{sec:k-small} that \textsc{CCTO} is in FPT parameterised by the number $k$ of vertices to be visited -- whether NP-hardness relies on the fact that our target number of objects to visit is equal to the number of vertices.  In fact, we can easily adapt the reduction described above to give NP-hardness when $k = \mathcal{O}(n^c)$ for any $c \in (0,1]$.  Let $(S_{n'},x,x,F,n'+1,2n')$ be an instance of \textsc{CCTO} constructed via the reduction above.  We then add $\Theta((n')^{1/c})$ additional leaves so that the total number of vertices is $n$ and we have $n' = \mathcal{O}(n^c)$.  We extend $F$ to these additional vertices by making it possible to traverse the edges to the new leaves with cost $1$ only at times $\tau+1$ and $\tau+2$.  The key observation is that we can visit at most one of these new vertices in any walk of finite cost, so if we increase our budget by $2$ and the target number of vertices to visit by one, this instance is again equivalent to the original instance of \textsc{StarExp}.



\section{Trees and bounded maximum traversal number}\label{sec:trees}

We begin by designing algorithms to solve \textsc{CCTO} on temporal cost graphs with restricted underlying graphs and cost functions. This uses a generalisation of the notion of the time expanded graph of a temporal graph, and the fact that this is a directed acyclic graph.  We consider several different restrictions on the input, each of which allows us to ensure that we have a path consisting of $h$ edges if and only if we visit at least $g(h)$ distinct vertices in the corresponding walk on the original graph for some computable function $g$.

We begin by defining the time expanded cost graph; a generalisation of the \emph{time expanded graph} of a temporal graph. Here, for a temporal graph $(G,\lambda)$, the time expanded graph is a directed graph with vertices $u_t$ for each $u \in V(G)$ and $t \in [\tau]$. For all $u \in V(G)$, there is an arc from $u_t$ to $u_{t'}$ if $t' > t$ and for all $v \in V(G)$, there is an arc from $u_t$ to $v_{t+x}$ if $t \in \lambda(u,v)$ and $x$ is the travel time of such edge at time $t$. Assuming that the travel times are strictly greater than~$0$, the time expanded graph is a directed acyclic graph (DAG).

  \begin{definition}[Time expanded cost graph]\label{def:time-expanded-cost}
     The \emph{time expanded cost graph} of a temporal cost graph $(G,F)$ with lifetime $T$ is a directed graph consisting of a vertex $u_t$ for each $u \in V(G)$ and $t \in [T]$. For all $u,v \in V(G)$, there is an arc from $u_{t_1}$ to $v_{t_2}$ if $F(u,v,t_1,t_2)<\infty$. We call such a graph where the arcs $v_{t_1}u_{t_2}$ are labelled with $F(v,u,t_1,t_2)$ \emph{weighted}.
 \end{definition}

Note that, since the cost of staying at a vertex is $0$, there is an arc $u_t$ to $u_{t'}$ in the time expanded cost graph if $t' > t$. As with a time expanded graph, a time expanded cost graph is a DAG. We construct the time-expanded cost graph of a temporal cost graph by adding the necessary vertices and adding arcs where needed for every pair of vertices and pair of increasing times.

 \begin{lemmarep}\label{lem:expanded-graph-poly}
     Given a temporal cost graph $(G,F)$ with $n$ vertices and lifetime $T$, we can construct its time expanded cost graph in $O(n^2T^2)$ time.
 \end{lemmarep}
 \begin{proof}
     We begin by creating a vertex for each vertex time pair. There are exactly $nT$ vertices to add. To find all arcs in the time expanded cost graph, we determine the cost of every edge at every pair of increasing times. This is bounded by $n^2T^2$. Assuming we can retrieve $F(v_1,v_2,t_1,t_2)$ for inputs $v_1,v_2,t_1,t_2$ in constant time, we can therefore construct the graph in $O(n^2T^2)$.
 \end{proof}

Our first result is in the setting where the underlying graph $G$ is a tree, the source and sink are the same vertex $s$, and the maximum traversal number of any edge in the temporal cost graph $(G,F)$ is at most 3. The intuition of the following result is that, if the underlying graph is a tree, the source and sink are the same vertex, and the maximum traversal number of any edge is 3, any edge can only be used at most twice on a valid walk. We use this fact paired with an auxiliary graph on which we look for a path from a source vertex to a sink vertex. The auxiliary graph is similar to the time expanded cost graph with additional features that we use to keep track of the number of vertices visited by the walk. 

To construct the auxiliary graph $H$, for each $u\in V(G)$, $t\in [T]$ and $i\in [k]$ we add a vertex $u_{t,i}$. If a vertex $u$ is closer to $s$ than a vertex $v$ in $G$, we add an arc labelled $F(u,v,t_1,t_2)$ from $u_{t_1,i}$ to $v_{t_2,i'}$ where $i'=\min\{i+1,k\}$ if and only if $F(u,v,t_1,t_2)$ is finite. If $u$ is further from $s$ than $v$ is, we add an arc labelled $F(u,v,t_1,t_2)$ from $u_{t_1,i}$ to $v_{t_2,i}$
    if and only if $F(u,v,t_1,t_2)$ is finite. That is, we only increment $i$ if we are visiting a new vertex and we have not yet visited $k$ vertices.
A sketch of the auxiliary graph construction is given in Figure~\ref{fig:3-tree-tsp}.

\begin{theoremrep}\label{thm:tree}
    \textsc{CCTO} is solvable in $O((nTk)^2)$ time for any instance $((G,F),s,s,k,f)$ where $G$ is a tree with $n$ vertices, the source and sink are the same vertex $s$, the lifetime of $(G,F)$ is $T$, and the maximum traversal number of any edge in $(G,F)$ is at most 3.
\end{theoremrep}

    \begin{figure}[ht]
        \centering
        \begin{tikzpicture}[every loop/.style={}, thick]
            \tikzset{vertex/.style={draw, circle, inner sep=0.55mm,fill=black}}
            \tikzstyle{diredge} = [-stealth]
        
            \node[vertex] (s) at (-1,1.5) [label={$s_{0,0}$}] {};
            
            \node[vertex] (u) at (0,0) [label={$u_{t_1,q}$}] {};
            \node[vertex] (v) at (0.5,-1) [label=below:{$v_{t_2,q+1}$}] {};
            
            \node[vertex] (u2) at (3,0) [label={$u_{t_4,q'}$}] {};
            \node[vertex] (v2) at (2.5,-1) [label=below:{$v_{t_3,q'}$}] {};
            \node[vertex] (w2) at (3.5,-2) [label=below:{$w_{t_5,q'+1}$}] {};

            \node[vertex] (u3) at (6,0) [label={$u_{t_7,q''}$}] {};
            \node[vertex] (w3) at (5.5,-2) [label=below:{$w_{t_6,q''}$}] {};

            \node[vertex] (s2) at (7,1.5) [label={$s_{T,k}$}] {};
            
            \tikzstyle{every node}=[font=\footnotesize,fill=white,inner sep=0.8pt]
            \begin{scope}[bend angle=30]
        
            \draw (s) edge[dotted] node {} (s2);
            \draw (-1,0) edge[dotted] node {} (7,0);
            \draw (-1,-1) edge[dotted] node {} (7,-1);
            \draw (-1,-2) edge[dotted] node {} (7,-2);
            
            \draw (u) edge[diredge] node {$F(u,v,t_1,t_2)$} (v);
            \draw (v2) edge[diredge] node {$F(v,u,t_3,t_4)$} (u2);
            \draw (u2) edge[diredge] node {} (w2);
            \draw (w3) edge[diredge] node {} (u3);

            \draw (s) edge[diredge,dashed,bend right] node {} (u);
            \draw (v) edge[diredge,dashed,bend right] node {} (v2);
            \draw (w2) edge[diredge,dashed,bend left] node {} (w3);

            \draw (u3) edge[diredge,dashed,bend right] node {} (s2);
            
            \end{scope}
        \end{tikzpicture}
        \caption{A weighted path in the auxiliary graph $H$ from $s_{0,0}$ to $s_{T,k}$ is equivalent to a walk from $s$ to itself in the original graph traversing at least $k$ distinct vertices in $G$. Horizontal dotted lines signify the distance of the corresponding vertices from $s$ in the original graph, dashed arcs replace segments of the walk. For simplicity, the weights are only labelled on the first two arcs pictured.}
        \label{fig:3-tree-tsp}
    \end{figure}
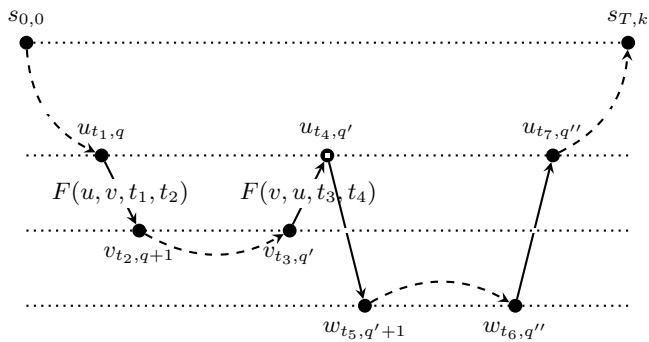

 \begin{proof}
    We note that, since the maximum traversal number of edges is at most~$3$, the source and sink are the same vertex $s$, and the underlying graph is a tree,
    an edge cannot be traversed more than twice. Else, we would not be able to return to $s$. To solve \textsc{CCTO}, we build an auxiliary graph $H$ which is similar to the weighted time expanded cost graph. Such a graph can be seen in Figure \ref{fig:3-tree-tsp}. For each $u\in V(G)$, $t\in [T]$ and $i\in [k]$ we add a vertex $u_{t,i}$. If a vertex $u$ is closer to $s$ than a vertex $v$ in $G$, we add an arc labelled $F(u,v,t_1,t_2)$ from $u_{t_1,i}$ to $v_{t_2,i'}$ where $i'=\min\{i+1,k\}$ if and only if $F(u,v,t_1,t_2)$ is finite. If $u$ is further from $s$ than $v$ is, we add an arc labelled $F(u,v,t_1,t_2)$ from $u_{t_1,i}$ to $v_{t_2,i}$
    if and only if $F(u,v,t_1,t_2)$ is finite. That is, we only increment $i$ if we are visiting a new vertex and we have not yet visited $k$ vertices.

    Since no edge can be used more than twice in a valid walk, any path in $H$ from $s_{0,0}$ to $s_{T,k}$ must correspond to a valid walk in $G$ traversing at least $k$ distinct vertices. Note that the counter of the number of vertices visited begins at $0$ so that the source vertex is not double-counted. Furthermore, since an arc is added to $H$ for any finite cost edge in $(G,F)$, there is a walk from $s$ to itself in $(G,F)$ only if there is a path from $s_{0,0}$ to $s_{T,i}$ for some $i$ in $H$. The path using least fuel from $s_{0,0}$ to $s_{T,k}$ can be calculated in $O(|V(H)|+|E(H)|)$ time by topological sorting. There are $nTk$ vertices in $H$ and thus at most $(nTk)^2$ arcs. Provided we can retrieve $F$ in constant time for any input, we can build $H$ in at most $(nTk)^2$ time. Therefore, when the underlying graph $G$ is a tree, the source and sink are the same vertex $s$, and the minimum traversal number of any edge is at most~$3$, we can solve \textsc{CCTO} in $(nTk)^2$ time.
\end{proof}

We can also generalise Theorem \ref{thm:tree} to the case where some of the edges can be allowed to have maximum traversal number greater than three.

\begin{theoremrep}\label{thm:subforest}
    Let $(G,F)$ be a temporal cost graph where $G$ is a tree with $n$ vertices, the lifetime of $(G,F)$ is $T$ and let $E'$ be a subset of edges in $G$ such that the subgraph induced by $E'$ is a forest containing at most $\ell$ leaves.
    If every edge in $E(G)\setminus E'$ has maximum traversal number~$3$, we can solve \textsc{CCTO} on any instance $((G,F),v_s,v_t,k,f)$ in $O(k^2T^2n^{2\ell+2})$ time.
\end{theoremrep}
\begin{proof}
    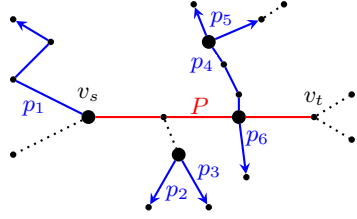
\begin{figure}[ht]
        \centering
        \begin{tikzpicture}[every loop/.style={}, thick,auto]
            \tikzset{vertex/.style={draw, circle, inner sep=0.55mm,fill=black}}
            \tikzset{smallvertex/.style={draw, circle, inner sep=0.2mm,fill=black}}
            \tikzstyle{diredge} = [-stealth]
            \tikzstyle{subforest} = [blue]
            \tikzstyle{notsubforest} = [dotted]
            \tikzstyle{mainpath} = [red]
        
            \node[vertex] (s) at (0,0) [label={$v_s$}] {};   
            \node[smallvertex] (a) at (1,0) [] {};     
            \node[vertex] (a2) at (2,0) [] {};     
            \node[smallvertex] (t) at (3,0) [label={$v_t$}] {};

            \node[smallvertex] (b) at (-1,0.5) [] {};
            \node[smallvertex] (c) at (-0.5,1) [] {};
            \node[smallvertex] (d) at (-1,1.3) [] {};
            \node[smallvertex] (e) at (-1,-0.5) [] {};

            \node[vertex] (f) at (1.2,-0.5) [] {};
            \node[smallvertex] (g) at (0.8,-1.2) [] {};
            \node[smallvertex] (h) at (1.6,-1.2) [] {};
            
            \node[smallvertex] (i) at (2,0.3) [] {};
            \node[smallvertex] (j) at (1.8,0.7) [] {};
            \node[vertex] (k) at (1.6,1) [] {};
            \node[smallvertex] (l) at (1.4,1.5) [] {};
            \node[smallvertex] (m) at (2.3,1.3) [] {};
            \node[smallvertex] (m2) at (2.6,1.5) [] {};
            
            \node[smallvertex] (n) at (2.1,-0.8) [] {};
            
            \node[smallvertex] (o) at (3.5,0.3) [] {};
            \node[smallvertex] (p) at (3.5,-0.3) [] {};

            \tikzstyle{every node}=[font=\footnotesize,fill=white,inner sep=0.8pt]
            \begin{scope}[bend angle=30]

            \draw[mainpath] (s) edge (a);
            \draw[mainpath] (a) edge node {$P$} (a2);
            \draw[mainpath] (a2) edge (t);

            \draw[subforest] (s) edge node {$p_1$} (b);
            \draw[subforest] (b) edge (c);
            \draw[subforest,diredge] (c) edge (d);
            \draw[notsubforest] (s) edge (e);
            \draw[notsubforest] (a) edge (f);
            \draw[subforest,diredge] (f) edge node {$p_2$} (g);
            \draw[subforest,diredge] (f) edge node {$p_3$} (h);
            \draw[subforest] (a2) edge (i);
            \draw[subforest] (i) edge (j);
            \draw[subforest] (j) edge node {$p_4$} (k);
            \draw[subforest,diredge] (k) edge (l);
            \draw[subforest,diredge] (k) edge node {$p_5$} (m);
            \draw[notsubforest] (m) edge (m2);
            \draw[subforest,diredge] (a2) edge node {$p_6$} (n);
            \draw[notsubforest] (t) edge (o);
            \draw[notsubforest] (t) edge (p);
            \end{scope}
        \end{tikzpicture}
        \caption{An edge-partition of a tree into paths based on the of the leaves of the subforest. Dotted edges have traversal number at most~$3$ while the rest have any traversal number.}
        \label{fig:tree-tsp}
    \end{figure}

    We solve \textsc{CCTO} by building an auxiliary graph on which we look for a path  from the source to a sink $v_{t'}$. First, consider the subforest $\mathcal{T}$ with $\ell$ leaves. We partition the edges of $\mathcal{T}$ in to $\ell$ paths $\{p_j\}_{j\in [\ell]}$ such that each path contains a leaf of $\mathcal{T}$, and every edge in $\mathcal{T}$ appears in exactly one path (see Figure~\ref{fig:tree-tsp}). We can do this in $O(n\ell)$ time by ordering the leaves arbitrarily and, for each leaf, taking the other endpoint of the path to be the closest vertex $v_{s'}\in \mathcal{T}$ to $v_s$ that is not contained in any of the other paths. For each path, we index a vertex on this path $v_0,\ldots v_p$ where $v_0=v_{s'}$ and $v_p$ is the leaf of the subtree in the path. In addition, we also find the path $P$ from $v_s$ to $v_t$ if $v_s\neq v_t$. Otherwise, $P=\{v_s\}=\{v_t\}$.
    
    We build a weighted DAG $H$ similar to the auxiliary graph used in Theorem \ref{thm:tree} and find a path from the source to the sink. The vertex set of $H$ consists of a sink vertex $v_{t'}$ and a vertex labelled $(u,t,q,q_1,\ldots,q_{\ell})$ for each $u\in V(G)\setminus \{v_s\}$, $t\in[T]$, $q\in[1,k]$, $q_j\in [n]$ for all $1\leq j\leq \ell$. Here, $u$ tells us which vertex of $G$ we are at; $t$ is a time; $q$ is the number of vertices reached by the walk from $v_s$ to $u$; and $q_j$ is $i$, the highest index of the vertices $v_0,\ldots,v_p\in p_j$ that the walk traverses for each $1\leq j \leq \ell$. If $v_s\neq v_t$, we add a source vertex labelled $(v_s,0,1,q'_1,\ldots q'_{\ell})$ where $q'_i=1$ if $v_s\in p_i$ and $q'_i=0$ otherwise. Else, if $v_s= v_t$, we add a source vertex labelled $(v_s,0,0,q'_1,\ldots q'_{\ell})$ where $q'_i=1$ if $v_s\in p_i$ and $q'_i=0$ otherwise.

    We add an arc with cost 0 from each vertex of $H$ labelled $(v_t,t,k,q_1,\ldots,q_{\ell})$ to the vertex $v_{t'}$ for all values of $t, q_1,\ldots,q_{\ell}$. After this, an arc $((u,t,q,q_1,...,q_{\ell}),(v,t',q',q_1',...,q_{\ell}'))$ with cost $F(u,v,t,t')$ is added if and only if:
    \begin{itemize}
        \item if $u = v$, $t' = t+1$, $q = q',$ and, for all $ i \in [\ell], q_i = q_i'$ (waiting on the same vertex);
        \item else, $F(u,v,t,t')<\infty$ and:
        \begin{itemize}
            \item if $v\notin \mathcal{T}$, $q_i=q_i'$ for all $1\leq i\leq \ell$, and:
            \begin{itemize}
                \item if $v$ is closer to $v_t$ than $u$, and $q' = \min\{q+1,k\}$ (moving closer to the sink and visiting an uncounted vertex);
                \item if $u$ is closer to $v_t$ than $v$ is, both $u$ and $v$ are in $P$, and $q' = q-1$ (moving away from the sink where $v$ is a vertex we must traverse again to reach the sink);
                \item if $u$ is closer to $v_t$ than $v$ is and $q' = q$ (we are moving away from the sink and the vertices will be counted on the path back to a vertex in $P$);
            \end{itemize}
            \item else, $v$ must be part of one of the $l$ paths, say~$p_j$ and:
            \begin{itemize}
                \item if the index of $v$ in $p_j$ is greater than $q_j$, $q' = \min\{q+1,k\}$, $q_j' = q_j +1$, and $q_i = q_i'$ for all $i\neq j$, $1\leq i\leq \ell$ (we are traversing a new vertex in $\mathcal{T}$);
                \item else $q' = q$ and $q_i = q_i'$ for all $i\in[\ell]$ (we are not traversing a new vertex).
            \end{itemize}
        \end{itemize}
    \end{itemize}
    We have a yes-instance of \textsc{CCTO} if and only if there exists a path in $H$ from $v_s$ to $v_{t'}$ traversing at least $k$ distinct vertices. As in Theorem \ref{thm:tree}, our argument hinges on the fact that any edge not in $\mathcal{T}$ or $P$ can be used at most twice. Similarly, any edges in $P$ and not $\mathcal{T}$ can be used either once or three times. It is clear from the construction that there exists a path from the source to $(u,t,q,q_1,...,q_{\ell})$ in $H$ if and only if there is a corresponding walk from $v_s$ to $u$ in $(G,F)$.

    We first prove that such path visits at least $k$ distinct vertices. Observe that, in order for a path from the source to reach $v_{t'}$, $q$ must be equal to $k$ in the penultimate quadruple in the path in $H$ from the source to the sink.  When the path traverses an edge in $\mathcal{T}$, $q$ is incremented only if the vertex reached has never been visited on path $p_i$, this is guaranteed by the value $q_i$ in the label of the vertex visited. For edges not in $\mathcal{T}$, $q$ is incremented only if the vertex reached is closer to $v_t$ than the previous one. If the edge is traversed more than once, there are two cases:
    \begin{itemize}
        \item if the edge is in $P$, the path between $v_s$ and $v_t$, the edge can be traversed once towards $v_t$ or thrice -- twice towards $v_t$ and once away. Since $q$ is decremented when traversing the edge away from $v_t$, $q$ can only be incremented once in total;
        \item if the edge is not in $P$, it can be used only twice -- once away from $v_t$ and once towards $v_t$. The counter $q$ is incremented in only one of these circumstances.
    \end{itemize}

    Regardless of the length of the path, $q$ will count the number of distinct vertices traversed up to $k$. A path in $H$ from the source to $v_{t'}$ represents a valid walk from $v_s$ to $v_t$ in $(G,F)$ traversing at least $k$ distinct vertices. The cost of the path is the same as the fuel cost of the walk in $(G,F)$.

    What remains is to show that we can solve \textsc{CCTO} in $O(k^2T^2n^{2\ell+2})$ time. Finding $\ell$ disjoint paths in $\mathcal{T}$ can be achieved in $O(n\ell)$ time as mentioned before. Constructing $H$ requires $kTn^{\ell+1}+1$ vertices. For each ordered pair of vertices, we can check if the arc between them exists in constant time. Thus, construction of $H$ requires $O(k^2T^2n^{2\ell+2})$ time. Finding a path from $v_s$ to $v_{t'}$ introduces an additive factor of $O(k^2T^2n^{2\ell+2})$ to the run time. This gives us the desired result.
\end{proof}
Using similar arguments, we can show that \textsc{CCTO} is solvable in polynomial time in more general graphs with a more restricted cost function. Here we do not explicitly restrict the underlying graph, but we require that there are at most 3 finite cost moves we can make from any starting point.


\begin{theoremrep}\label{thm:3-general}
    Let $(G,F)$ be a temporal cost graph with $n$ vertices and lifetime $T$ such that, for each vertex $v\in V(G)$ there are only three distinct triples $u,t_1,t_2$ such that either $F(v,u,t_1,t_2)$ or $F(u,v,t_1,t_2)$ (or both) is finite. Then, \textsc{CCTO} is solvable on any instance $((G,F),v_s,v_t,k,f)$ in $O((nTk)^2)$ time.
\end{theoremrep}

\begin{proof}
    As in Theorem \ref{thm:tree}, we build the graph $H$ whose vertex set consists of a vertex for each $u\in V(G)$, $t\in [T]$ and $i\in [k]$. We add an arc labelled $F(u,v,t_1,t_2)$ from $u_{t_1,i}$ to $v_{t_2,i'}$ where $i'=\min\{i+1,k\}$ if and only if either $F(v,u,t_1,t_2)$ or $F(u,v,t_1,t_2)$ is finite. 

    Since, for each vertex $v\in V(G)$ there are only three distinct triples $u,t_1,t_2$ such that either $F(v,u,t_1,t_2)$ or $F(u,v,t_1,t_2)$ is finite, each vertex can only be traversed once (unless it is both the source and sink vertex). Denote the source and sink vertices by $s$ and $t$ respectively. If $s=t$, we consider paths starting from $s_{0,0}$. Else, we consider paths starting from $s_{0,1}$.  A path from $s_{0,1}$ to $v_{t',i}$ exists in $H$ if and only if there exists a corresponding walk in $(G,F)$ from $s$ to $v$ arriving by $t'$ and traversing $i$ distinct vertices. Thus, if $s=t$, we have a yes-instance if and only if there is a path in $H$ from $s_{0,0}$ to $t_{T,k}$ where $T$ is the lifetime of $(G,F)$. If $s\neq t$, we have a yes-instance if and only if there is a path in $H$ from $s_{0,1}$ to $t_{T,k}$ where $T$ is the lifetime of $(G,F)$.

    The path using least fuel from $s_{0,0}$ or $s_{0,1}$ to $s_{T,k}$ can be calculated in $O(|V(H)|+|E(H)|)$ time by topological sorting. There are $nTk$ vertices in $H$ and thus at most $(nTk)^2$ arcs. Provided we can retrieve $F$ in constant time for any input, we can build $H$ in at most $(nTk)^2$ time. Therefore, if for each vertex $v\in V(G)$ there are only three distinct triples $u,t_1,t_2$ such that either $F(v,u,t_1,t_2)$ or $F(u,v,t_1,t_2)$ is finite, we can solve \textsc{CCTO} in $(nTk)^2$ time.
    \end{proof}

\section{Vertex-interval-traversal number}\label{sec:vitw}

Bumpus and Meeks \cite{bumpus2021edge} introduce the parameter \emph{vertex-interval-membership width}, $\omega(G,\lambda)$ of a temporal graph $(G,\lambda)$, and give an algorithm which computes $\omega(G,\lambda)$ in time $O(\omega\tau)$ where $\tau$ is the latest time an edge in $(G,\lambda)$ is active. The vertex-interval-membership width is calculated by finding the \emph{vertex-interval-membership sequence} of the temporal graph $(G,\lambda)$. This is the sequence $(F_t)_{t\in[\tau]}$ of vertex-subsets of $(G,\lambda)$ where $F_t:=\{v\in V(G):\exists i\leq t\leq j\text{ and }u,w\in V(G)\text{ such that }i\in \lambda(uv) \text{ and }j\in\lambda(wv)\}$. The vertex-interval-membership width is $\omega((G,\lambda)):=\max_{t\in \tau}|F_t|$.  Roughly speaking, for many problems this parameter bounds the number of vertices of interest at any time (where, for example, a vertex may be considered interesting at time $t$ if we could have reached it before $t$ and subsequently depart at some time after $t$)~\cite{enright2025families}.

We generalise the definition of vertex-interval-membership sequence to define an analogous parameter for temporal cost graphs.

\begin{definition}[Vertex-interval-traversal width] 
The \emph{vertex-interval-traversal sequence} of a temporal cost graph $(G,F)$ with lifetime $T$ is the sequence $(H_t)_{t\in[T]}$ of vertex-subsets of $G$ where $H_t:=\{v\in V(G):\exists i\leq t\leq j, t_1, t_2\in[T]\text{ and }u,w\in V(G)\text{ such that }F(u,v,t_1,i)<\infty \text{ and }F(v,w,j,t_2)<\infty\}$. Note that $u$ and $w$ may be the same vertex. The \emph{vertex-interval-traversal width} is $\varphi((G,F)):=\max_{t\in [T]}|H_t|$.
\end{definition}

We begin by observing that we can compute the vertex-interval-traversal sequence efficiently.

\begin{lemmarep}\label{lem:vitw-computing-time}
    Let $G$ be a graph with cost function $F$ with lifetime $T$ such that every edge of $G$ has finite cost for at least one pair of times. Then we can find the vertex-interval-traversal sequence of $(G,F)$ in time $O(n^2T^2)$.
\end{lemmarep}

\begin{proof}
    We begin by initialising a sequence $(H_t)_{t\in[T]}$ of empty sets. For each vertex $v\in V(G)$, we find
    \begin{itemize}
        \item $t_{\min}(v)$ the minimum time $t$ such that there exists a vertex $v_2$ and time $t_2$ such that either $F(v,v_2,t,t_2)$ or $F(v_2,v,t_2,t)$ is finite, and
        \item $t_{\max}(v)$ the maximum time $t$ such that there exists a vertex $v_3$ and time $t_3$ such that either $F(v,v_3,t,t_3)$ or $F(v_3,v,t_3,t)$ is finite.
    \end{itemize} 
    For all times $t_{\min}(v)\leq t'\leq t_{\max}(v)$, we add $v$ to $H_{t'}$.

    If we assume we can find $F$ for any inputs in constant time, we must find $F$ at most $2n^2T^2$ times to calculate $t_{\min}$ and $t_{\max}$ for each vertex in $G$. Following this, we add $v$ to $H_{t'}$ for every $t_{\min}(v)\leq t'\leq t_{\max}(v)$. This takes $O(n\varphi)\leq O(n^2)$ time. Therefore, the algorithm takes at most $2n^2T^2+n\varphi\leq 2n^2T^2+n^2=O(n^2T^2)$ time.
\end{proof}

Equipped with this notion, we give the following result. Our algorithm functions by dynamic programming over the vertex-interval-traversal sequence. At each time $t$, we find a set of states corresponding to valid walks which record the last vertex on the walk, the number of vertices traversed on the walk before $t$ which are not in the bag $H_t$, the set of vertices in $H_t$ traversed by the walk, and the fuel cost of the walk so far.

\begin{theoremrep}\label{thm:vitw}
    Given a temporal cost graph $(G,F)$ with $n$ vertices, lifetime $T$, and vertex-interval-traversal width $\varphi$, we can solve \textsc{CCTO} in $O(2^{2\varphi}T(\varphi  kf)^2+n^2T^2)$ time, or $O(2^{2\varphi}T(\varphi  kf)^2)$ time when the vertex-interval-traversal sequence is given.
\end{theoremrep}

\begin{proof}
    We show this by dynamic programming over the vertex-interval-traversal sequence. As in Lemma \ref{lem:vitw-computing-time}, we can find this in $O(n^2T^2)$ time. For each set $H_i$, we compute a set of states $(v_{\text{fin}},\ell_{\text{before}},\ell_{\text{in}},f^*)$ where $v_{\text{fin}}\in \{v\in H_i\}$, $\ell_{\text{before}}\in [0,k]$, $\ell_{\text{in}}\subseteq H_i$, and $f^*\in [0,f]$. We call a state $(v_{\text{fin}},\ell_{\text{before}},\ell_{\text{in}},f^*)$ of $H_i$ \emph{valid} if and only if there is a valid walk from $v_s$ to $v_{\text{fin}}$ that traverses $\ell_{\text{before}}$ vertices in $\bigcup_{j<i}H_j\setminus H_i$ and the vertices in $H_i$ in accordance to $\ell_{\text{in}}$ using $f^*$ fuel. We have a yes-instance of \textsc{CCTO} if there is a valid state in the final set of the sequence such that $v_{\text{fin}}=v_t$, the fuel used is at most $f$, and the number of vertices in the walk described by that state is at least $k$.

    We assume without loss of generality that there exists a finite-cost edge departing $v_s$ at time 1 and another finite-cost edge which arrives at $v_t$ at time $T$, the lifetime of the temporal cost graph. Else, we apply a uniform shift to the times and disregard all times after the last finite cost tuple arrives at $v_t$. We also create the set $H_0$ to which we add only $v_s$. 
    For a given set $H_j$, there are at most $\varphi\cdot k\cdot 2^{\varphi}\cdot f$ valid states.

    The set $H_0$ is given the trivial state $(s,0,\ell_{\text{in}}=\{v_s\}, 0)$. Given a state $(v_{\text{fin}},\ell_{\text{before}},\ell_{\text{in}},f^*)$ of $H_{j}$, we say it is valid if and only if 
    \begin{itemize}
        \item there is a state $(v_{\text{fin}},\ell'_{\text{before}},\ell'_{\text{in}},f^*)$ of $H_{j-1}$ where $\ell_{\text{before}}=\min\{k,\ell'_{\text{before}}+|\ell'_{\text{in}}\setminus H_{j}|\}$ (we do not move vertices at time $j$); or
        \item there is a state $(v_{\text{fin}}'',\ell_{\text{before}}'',\ell_{\text{in}}'',f'')$ of $H_{j''}$ with $j''<j$ (we move vertices) such that:
        \begin{itemize}
            \item $f''+F(v''_{\text{fin}},v_{\text{fin}},j'',j)<f$ (the fuel cost is still below $f$);
            \item $|\ell_{\text{in}}''\cap (H_{j''}\setminus H_j)|\geq \ell_{\text{before}}-\ell_{\text{before}}''$ (any vertices forgotten from $H_{j''}$ to $H_j$ which are traversed by the walk are counted);
            \item $\ell_{\text{in}}= (\ell''_{\text{in}}\cap H_j)\cup \{v_{\text{fin}}\}$ (the final vertex in the walk is marked as reached);
            \item for all $v\in H_{j''}\setminus H_{j'}$, $v\in \ell_{\text{in}}$ if and only if $v=v_{\text{fin}}$ (any vertices introduced between $H_{j''}$ and $H_j$ are only reached if they are the final vertex on the walk);
            \item $f^*=f''+F(v''_{\text{fin}}, v_{\text{fin}}, j'',j)$ (the fuel used to traverse these vertices by time $j$ is minimised).
        \end{itemize}
    \end{itemize}
    We now show correctness of the dynamic program. We aim to show that a state is \emph{valid} if and only if it is realised by a walk.
    \begin{claim}
        There is a valid state $(v_{\text{fin}},\ell_{\text{before}},\ell_{\text{in}},f^*)$ of a set $H_j$ if and only if there is a walk from $v_s$ to $v_{\text{fin}}$ that uses fuel $f^*$ which traverses the vertices in $\ell_{\text{in}}$ and at least $\ell_{\text{before}}$ vertices whose incident edges only have finite cost at times strictly before $j$.
    \end{claim}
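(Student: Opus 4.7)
The plan is to prove both directions of the biconditional by simultaneous induction on $j$. The base case $j=0$ is immediate: $H_0 = \{v_s\}$, the DP produces only the state $(v_s, 0, \{v_s\}, 0)$, and this exactly encodes the trivial walk consisting of just the source.

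For the inductive step I would first handle the ``walk implies valid state'' direction. Given a walk $W$ witnessing the right-hand side of the claim, I examine its final edge. If no edge of $W$ arrives in the interval $(j-1, j]$, then the same walk $W$ witnesses a valid state of $H_{j-1}$ with the same final vertex $v_{\text{fin}}$, and the first (inheritance) DP transition produces the desired state of $H_j$ while correctly updating $\ell_{\text{before}}$ by $|\ell'_{\text{in}} \setminus H_j|$. Otherwise the final edge of $W$ is of the form $(v''_{\text{fin}}, v_{\text{fin}}, j'', j)$ for some $j'' < j$, the prefix $W'$ obtained by removing this edge witnesses (by the inductive hypothesis) a valid state at $H_{j''}$, and the second (extension) DP transition produces the required state at $H_j$ with cost $f'' + F(v''_{\text{fin}}, v_{\text{fin}}, j'', j)$. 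For the converse I reverse the argument: a valid state of $H_j$ is obtained either by inheritance from $H_{j-1}$ (apply IH unchanged) or by extension from some $H_{j''}$ (apply IH to recover $W'$, then append the edge of cost $F(v''_{\text{fin}}, v_{\text{fin}}, j'', j)$). Verifying that the five conditions listed in the extension transition match the conditions on the resulting walk is then bookkeeping.

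The main technical obstacle is ensuring that the bookkeeping of $\ell_{\text{before}}$ and $\ell_{\text{in}}$ remains faithful across large gaps $j'' \ll j$, in particular that no vertex is silently double-counted or missed. The key observation making this work is the following interval-membership invariant implied by the definition of $H_t$: if a vertex $v$ can be traversed by any finite-cost walk at a time inside a window $[a,b]$, then $v$ belongs to $H_t$ for every $t \in [a,b]$, and conversely $v$ cannot be visited at any time outside $[t_{\min}(v), t_{\max}(v)]$. Consequently the DP condition that any $v \in H_{j''} \setminus H_j$ newly marked in $\ell_{\text{in}}$ must equal $v_{\text{fin}}$, combined with the counting inequality $|\ell''_{\text{in}} \cap (H_{j''} \setminus H_j)| \geq \ell_{\text{before}} - \ell''_{\text{before}}$, exactly accounts for the set of distinct vertices the walk traverses between time $j''$ and time $j$. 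I expect this invariant to be the main place where care is needed, both to justify the extension step (no ``hidden'' vertices were visited outside $H_{j''} \cup H_j$) and to rule out spurious states that would overcount distinct vertices.
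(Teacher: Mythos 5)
Your proposal is correct and follows essentially the same route as the paper's own argument: induction on $j$ with the two DP transitions (inheritance from $H_{j-1}$ versus extension by a single edge from some $H_{j''}$) matched against whether the realising walk's last edge arrives strictly before or exactly at time $j$, applying the inductive hypothesis to the prefix walk. Your explicit statement of the interval-membership invariant is a nice touch (the paper uses it only implicitly), but it does not change the structure of the proof.
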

    \begin{nestedproof}
        We show the claim by induction on $j$. Our base case is the set $H_0$ which consists of only $v_s$ and has the state $(s,0,\{v_s\}, 0)$ which is realised by the trivial walk from $v_s$ to itself. 

    Now, assume that $j>0$ and for all $j'\leq j$, there is a valid state $(v_{\text{fin}},\ell_{\text{before}},\ell_{\text{in}},f^*)$ of $H_{j'}$ if and only if it is realised by a walk. That is, there is a valid walk from $v_s$ to $v_{\text{fin}}$ arriving by time $j'$ that uses $f^*$ fuel, traverses the vertices $v$ in the set $\ell_{\text{in}}\subseteq H_{j'}$ and at least $\ell_{\text{before}}$ vertices using edges at times strictly before $j'$. 
    
    We begin by assuming we have a valid state $(v_{\text{fin}},\ell_{\text{before}},\ell_{\text{in}},f^*)$ of $H_{j+1}$. If it meets the first criterion and we have not moved vertices at time $j$, then there is a valid state $(v_{\text{fin}},\ell'_{\text{before}},\ell'_{\text{in}},f^*)$ of $H_j$ where $\ell_{\text{before}}=\min\{k,\ell'_{\text{before}}+|\ell'_{\text{in}}\setminus H_{j}|\}$. The difference in $\ell_{\text{before}}$ and $\ell_{\text{in}}$ account for the vertices forgotten between $H_j$ and $H_{j+1}$. Therefore, if there is a walk that realises the state of $H_j$ it must also realise $(v_{\text{fin}},\ell_{\text{before}},\ell_{\text{in}},f^*)$. We have assumed this to be true by the inductive hypothesis; thus $(v_{\text{fin}},\ell_{\text{before}},\ell_{\text{in}},f^*)$ is valid.
    
    If, for the state $(v_{\text{fin}},\ell_{\text{before}},\ell_{\text{in}},f^*)$ of $H_{j+1}$, there is a valid state $(v_{\text{fin}}'',\ell_{\text{before}}'',\ell_{\text{in}}'',f'')$ of $H_{j''}$ for some $j''\leq j$ and an edge between $v_{\text{fin}}$ and $v_{\text{fin}}''$ such that $f''+F(v''_{\text{fin}},v_{\text{fin}},j'',j+1)<f$, the walk found by adding the edge $v_{\text{fin}}''v_{\text{fin}}$ departing at time $j''$ and arriving at time $j+1$ to the walk which realises $(v_{\text{fin}}'',\ell_{\text{before}}'',\ell_{\text{in}}'',f'')$ must be a valid walk. The walk must traverse $\ell_{\text{before}}$ vertices which are only appear in sets $H_{j^*}$ for $j^*<j$ where $\ell_{\text{before}}$ is the sum of $\ell_{\text{before}}''$ and the vertices traversed by the walk which are forgotten between $H_{j'}$ and $H_{j+1}$. Of the vertices in $H_{j+1}$ the walk must also traverse only the vertices in $\ell_{\text{in}}''$ and $v_{\text{fin}}$. Finally, the fuel used in this walk must be $f^*=f''+F(v_{\text{fin}}'', v_{\text{fin}},j'',j+1)$.

    Now suppose there is a walk $W$ from $v_s$ to $v_{\text{fin}}$ over $\ell_{\text{before}}$ vertices which have been forgotten by time $j+1$ and the vertices in $\ell_{\text{in}}$ that uses fuel $f^*$. If $v_{\text{fin}}$ is reached by the walk by time $j$, there must be a state $(v_{\text{fin}},\ell'_{\text{before}},\ell'_{\text{in}},f^*)$ of $H_j$ which is realised by $W$ where $\ell_{\text{before}}=\min\{k,\ell'_{\text{before}}+|\ell'_{\text{in}}\setminus H_{j}|\}$. Therefore, $(v_{\text{fin}},\ell_{\text{before}},\ell_{\text{in}},f^*)$ is a valid state of $H_{j+1}$.
    
    Suppose that $W$ arrives at $v_{\text{fin}}$ at time $j+1$ exactly. Then, let $v'_{\text{fin}}$ be the penultimate vertex on the walk. Note that this must exist as, if $v_{\text{fin}}$ were the source vertex $v_s$, $W$ would reach $v_{\text{fin}}$ trivially at time 0. Let $j'$ be the time at which the walk $W$ departs the vertex $v'_{\text{fin}}$ and let $W'$ be the walk found by removing the final edge from $W$. By the inductive hypothesis, there is a state $(v''_{\text{fin}},\ell''_{\text{before}},\ell''_{\text{in}},f^*-F(v''_{\text{fin}}, v_{\text{fin}},j'' j+1))$ of $H_{j''}$ which is realised by $W'$. In addition, $|\ell_{\text{in}}''\cap (H_{j''}\setminus H_j)|\geq \ell_{\text{before}}-\ell_{\text{before}}''$ and $\ell_{\text{in}}= (\ell''_{\text{in}}\cap H_j)\cup \{v_{\text{fin}}\}$. Hence, the state $(v_{\text{fin}},\ell_{\text{before}},\ell_{\text{in}},f^*)$ must be supported by $W$.
    \end{nestedproof}
We have a yes-instance of \textsc{CCTO} if and only if there is a valid state with $v_{\text{fin}}=v_t$, $f^*\leq f$ and $|\ell_{\text{in}}|+\ell_{\text{before}}\geq k$. As stated in Lemma \ref{lem:vitw-computing-time}, we can find the value $\varphi$ in $O(n^2T^2)$ time. Given a pair of states such that one is a valid state of an earlier set in the vertex-interval-traversal sequence, we can perform the checks required to validate the later state in constant time. We have at most $T$ sets of the vertex-interval-traversal sequence to determine the valid states of and at most $\varphi\cdot k\cdot 2^{\varphi}\cdot f$ states per set. This gives us a runtime of $O(2^{2\varphi}T(\varphi  kf)^2+n^2T^2)$.
\end{proof}

\section{Target number of objects}\label{sec:k-small}
In this section we use a colour-coding argument to show that \textsc{CCTO} is tractable with respect to the number of vertices we are required to visit: 

\begin{theorem}\label{thm:fpt_k}
    Given a temporal cost graph $(G,F)$ with $n$ vertices and lifetime $T$, and a natural number $k$ indicating the target number of vertices to visit, \textsc{CCTO} is solvable in time $k^{O(k)}T^4n^4\log n$.
\end{theorem}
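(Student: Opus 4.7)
The plan is to combine the color-coding technique with a dynamic program on the time expanded cost graph of Definition~\ref{def:time-expanded-cost}. I would color the vertices of $G$ uniformly at random by a function $c : V(G) \to [k]$ and then search for a \emph{colorful} valid walk from $v_s$ to $v_t$ of cost at most $f$: one that visits at least one vertex of each of the $k$ colors. If $W$ is a valid walk of cost at most $f$ visiting at least $k$ distinct vertices of $G$, then any fixed $k$-element subset of those vertices receives pairwise distinct colors with probability at least $k!/k^k \geq e^{-k}$, so the existence of a colorful walk of cost at most $f$ is equivalent (up to failure of the random coloring) to the yes-instance condition. Derandomization then uses an $(n,k)$-perfect hash family of size $k^{O(k)}\log n$.

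For each coloring $c$ in the family I would build the time expanded cost graph $H$ in time $O(n^2T^2)$ via Lemma~\ref{lem:expanded-graph-poly}. Because $H$ is a DAG with $nT$ vertices and weighted arcs, the colorful-walk problem reduces to a DP on $H$. Define $A[v,\tau,S]$ to be the minimum cost of a valid walk from $v_s$ arriving at $v$ at time $\tau$ whose set of visited colors is exactly $S$, for $v \in V(G)$, $\tau \in [T]$ and $S \subseteq [k]$ with $c(v) \in S$. Initialize $A[v_s,\tau,\{c(v_s)\}] = 0$ for every $\tau$ (waiting at the source is free) and all other entries to $\infty$. For each arc $(u,t') \to (v,\tau)$ of $H$ of weight $F(u,v,t',\tau)$, apply the update
\[
A[v,\tau,S] \leftarrow \min\{A[v,\tau,S],\, A[u,t',S] + F(u,v,t',\tau),\, A[u,t',S\setminus\{c(v)\}] + F(u,v,t',\tau)\},
\]
whose two right-hand candidates correspond to whether $c(v)$ was already present in the colors visited before reaching $u$ or is newly introduced at $v$. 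Processing states in topological order of $\tau$, the algorithm returns \textsc{Yes} iff $\min_{\tau \le T} A[v_t,\tau,[k]] \leq f$ holds for some coloring in the family.

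Correctness follows by a routine induction on walk length: each $A[v,\tau,S]$ is realised by some valid walk ending at $v$ at time $\tau$ with visited color set $S$, and conversely every such walk is accounted for by the update above; together with the color-coding guarantee this means the algorithm accepts exactly on yes-instances. Each coloring costs $O(2^k \cdot |E(H)|) = O(2^k n^2 T^2)$ for the DP plus $O(n^2T^2)$ for constructing $H$, so running it over the hash family uses $k^{O(k)} n^2 T^2 \log n$ time in total, comfortably within $k^{O(k)} T^4 n^4 \log n$. The main subtle point is the color-coding argument itself: since walks may reuse vertices, the success probability must be analysed through the set of \emph{distinct} vertices visited by $W$, from which one fixes any $k$-element subset and invokes the standard $k!/k^k$ estimate together with the derandomized perfect hash family.
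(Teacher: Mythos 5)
Your proposal is correct, but it takes a genuinely different route from the paper. The paper solves the colourful problem (\textsc{CCCTO}) by iterating over all $(k-2)!$ orderings in which colours can first appear on the walk, and for each ordering running a dynamic programme over vertices of successive colours whose transitions use precomputed minimum-cost walks $\hat{F}_I$ restricted to already-seen colours (Lemmas~\ref{lem:non-direct-travel} and~\ref{lem:ordered-path-min}, Theorem~\ref{thm:colour-solve}); you instead run, for each colouring in the hash family, a single shortest-path computation on the product of the time expanded cost graph of Lemma~\ref{lem:expanded-graph-poly} with the subset lattice $2^{[k]}$, with states $(v,\tau,S)$ recording the set of colours seen so far. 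Your route handles the paper's key subtlety (walks may revisit vertices, so one cannot afford to remember which vertices were visited) in the same spirit as the paper: the colour set only records which colours have appeared at least once, revisits leave it unchanged, and colourfulness still certifies $k$ distinct vertices; but the product-DAG formulation makes correctness an immediate induction on the number of quadruples in the walk, since appending a quadruple updates the colour set locally and every arc strictly increases time. Two small points to state explicitly: treat $A[u,t',S]$ as $\infty$ whenever $c(u)\notin S$ (or restrict states to sets containing the colour of their vertex), and note that colouring $v_s$ and $v_t$ along with the other vertices is harmless, whereas the paper fixes their colours separately. Your per-colouring cost $O(2^k n^2 T^2)$ beats the paper's $O(k!\,n^4T^4)$, so your overall bound $k^{O(k)} n^2T^2\log n$ is in fact sharper than the stated $k^{O(k)}T^4n^4\log n$ (the exponential factor is dominated by the hash family in both cases); what the paper's permutation-based formulation buys instead is DP tables of size polynomial in $n$, $T$ and $k$ (the exponential dependence sits only in the number of orderings tried), versus the $2^k nT$ states your table requires.
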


 Intuitively, using a given arbitrary colouring of the input graph with $k$ colours and an ordering of those colours, we give an algorithm which efficiently verifies whether there exists a valid walk of cost at most $f$ in which at least one vertex of each colour is visited, and the order in which the first vertex of each colour is visited matches the specified order.  In particular, this ensures that the walk traverses at least $k$ distinct vertices (although for some colourings and orderings our algorithm may well find walks traversing more than $k$ distinct vertices).  
 Pairing this algorithm with a standard colour-coding technique solves \textsc{CCTO}. Our algorithm makes use of two auxiliary cost functions. The first gives us the minimum cost of a valid walk from one vertex to another given the departure and arrival times of the walk.

\begin{lemmarep}\label{lem:non-direct-travel}
    Let $(G=(V,E),F)$ be a temporal cost graph with $n$ vertices and lifetime $T$.
    We can compute, in time $O(n^3T^4)$, a function $\hat{F}: V \times V \times [T] \times [T] \rightarrow \mathbb{N}$ such that, for each pair of (not necessarily adjacent) vertices $(u,v)$ and each pair of times $(t_1,t_2)$, we have that $\hat{F}(u,v,t_1,t_2)$ is the minimum cost of any valid walk from $u$ to $v$ departing at time $t_1$ and arriving at time $t_2$ (or $\infty$ if no such walk exists).
\end{lemmarep}
\begin{proof}
    We will consider each source vertex $u$ in turn, and give a dynamic program which computes $\hat{F}(u,v,t_1,t_2)$ for all $v$, $t_1$ and $t_2$.  We compute the values for each arrival time in turn, starting with $0$: we initialise $\hat{F}(u,u,0,0) = 0$ and $\hat{F}(u,v,t,0) = \infty$ for all $v \neq u$ (since we cannot arrive at any vertex other than the source by time $0$).  We calculate values for arrival times $i > 0$, assuming we have already computed all values for arrival times $j < i$, via the following claim. When $v=u$, $\hat{F}(u,v,t_1,i)=0$ for all values of $t_1, i$ such that $t_1\leq i$. If $t_1=i$ and $v\neq u$, $\hat{F}(u,v,t_1,i)=\infty$ by our requirement that all edges in a temporal cost graph have positive cost.
    \begin{claim}
    For every vertex $v\neq u$, and all times $i$ with $1 \le t_1 < i \le T$ we have
        $$\hat{F}(u,v,t_1,i) = \min \{\hat{F}(u,v,t_1,t')+F(v',v,t'',i): t_1 \le t'\leq t''<i \text{ and } vv' \in E \}.$$
    \end{claim}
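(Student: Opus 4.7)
The plan is to prove the recurrence by induction on the arrival time $i$, combined with the base cases stated just above it. I read the right-hand side with a small transcription corrected: the first summand should be $\hat F(u,v',t_1,t')$ rather than $\hat F(u,v,t_1,t')$, since the prefix walk ends at the neighbour $v'$ from which the final edge to $v$ is taken. The base cases $\hat F(u,u,t_1,i)=0$ (for $t_1 \le i$), $\hat F(u,v,t_1,t_1)=\infty$ (for $v \neq u$), and $\hat F(u,v,t,0)=\infty$ (for $v \neq u$) follow immediately from the definition of a valid walk, the fact that every edge has strictly positive cost, and the requirement $t_1 < t_1'$ on each quadruple.

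For the inductive step, fix $u$ and assume $\hat F(u,\cdot,\cdot,t')$ is already correct for every $t' < i$. I prove the recurrence by two inequalities. The $\le$ direction: each triple $(v',t',t'')$ with $vv' \in E(G)$, $t_1 \le t' \le t'' < i$, finite $\hat F(u,v',t_1,t')$, and finite $F(v',v,t'',i)$ yields a valid walk from $u$ to $v$ departing at $t_1$ and arriving at $i$, obtained by concatenating a cost-optimal walk from $u$ to $v'$, a zero-cost wait on $v'$ from $t'$ until $t''$, and the final edge $(v',v,t'',i)$; its cost equals the term in the minimum, so $\hat F(u,v,t_1,i)$ is no larger. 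The $\ge$ direction: assuming $\hat F(u,v,t_1,i) < \infty$, pick any witness walk $W$, which must contain at least one edge since $v \neq u$. Let its last edge be $(v',v,t'',i)$ and let $t'$ be the arrival time at $v'$ of the preceding portion of $W$, defining $v' := u$, $t' := t_1$ when the final edge is also the first; the base case $\hat F(u,u,t_1,t_1)=0$ absorbs this degenerate situation uniformly. The prefix walk has cost at least $\hat F(u,v',t_1,t')$ by induction, and validity of $W$ forces $t_1 \le t' \le t'' < i$, so $(v',t',t'')$ is a candidate in the minimum and the total cost of $W$ is at least the right-hand side.

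The running-time bound $O(n^3T^4)$ follows by straightforward counting: for each of the $n$ source vertices $u$, and for each of the $O(nT^2)$ triples $(v,t_1,i)$ in the table, the right-hand side is a minimum over $O(nT^2)$ triples $(v',t',t'')$, each term computable in $O(1)$ using previously tabulated values of $\hat F$ together with constant-time access to $F$. The only mildly delicate point in the argument is the single-edge witness case of the $\ge$ direction, and it is handled cleanly once $\hat F(u,u,t_1,t_1)=0$ is in place. As a sanity check, the entire computation can equivalently be viewed as all-pairs shortest paths in the weighted time-expanded cost graph of Definition~\ref{def:time-expanded-cost} (which is a DAG), and the recurrence above is precisely the Bellman relaxation processed in topological order of arrival times.
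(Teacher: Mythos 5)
Your proof is correct and follows essentially the same route as the paper's: both directions of the recurrence are established by splitting off the last edge of a witness walk and appending an edge to an optimal prefix, with the identical $O(n^3T^4)$ counting argument. Your reading of the first summand as $\hat F(u,v',t_1,t')$ matches the paper's evident intent (the statement and its proof carry a $v$-versus-$v'$ typo), and your explicit treatment of the single-edge witness via $\hat F(u,u,t_1,t_1)=0$ only makes explicit a case the paper's wording glosses over.
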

    \begin{nestedproof}
        We first argue that, for all $t' \leq t'' \leq i$ and $v'$ with $vv' \in E$ we have $\hat{F}(u,v,t_1,i) \le \hat{F}(u,v,t_1,t')+F(v',v,t'',i)$.  Assume without loss of generality that the right-hand side is finite. Then, by definition of $\hat{F}(u,v,t_1,t')$, we know that there exists a valid walk $W$ from $u$ to $v$, departing at time $t_1$ and arriving at time $t'$, with cost $\hat{F}(u,v,t_1,t')$.  Moreover, we know that we can travel from $v'$ to $v$ via a single edge, departing at time $t''$ and arriving at time $i$, with cost $F(v',v,t'',i)$.  Since $t''>t'$, we can append this edge to $W$, giving a valid walk from $u$ to $v$, departing at time $t_1$ and arriving at time $i$, with cost $\hat{F}(u,v,t_1,t')+F(v',v,t'',i)$.  Thus, as $\hat{F}(u,v,t_1,i)$ is defined to be the minimum cost of such a walk, the inequality follows.

        To complete the proof of the claim we now show that there exist $t' \leq t'' \leq i$ and $vv' \in E$ such that $\hat{F}(u,v,t_1,i) \ge \hat{F}(u,v,t_1,t')+F(v',v,t'',i)$.  By definition, there exists a valid walk $W$ from $u$ to $v$, departing at time $t_1$ and arriving at time $i$, with cost $\hat{F}(u,v,t_1,i)$.  Since $i>1$, and the walk arrives at time exactly $i$, this walk traverses at least one edge.  Let $(v',v',t'',i)$ be the final element of $W$.  Removing this final element must give a shorter valid walk $W'$ from $u$ to $v'$, departing at time $t_1$ and arriving at some time $t'$ with $t_1 < t' \le t''$.  By definition, the cost of $W'$ is at least $\hat{F}(u,v,t_1,t')$, so the cost of $W$ is at least $\hat{F}(u,v,t_1,t')+F(v',v,t'',i)$, as required.
    \end{nestedproof}
    It remains to bound the time taken to carry out these computations.
    \begin{claim}
        The dynamic program returns $\hat{F}(u,v,t_1,t_2)$ for all $u,v\in V(G)$ and $t_1$, $t_2\in [T]$ in $O(n^3T^4)$ time.
    \end{claim}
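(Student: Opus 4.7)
The plan is to bound separately the total number of table entries filled by the dynamic program and the work required to fill each entry, and then multiply these quantities together.

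First, I would count table entries. There are $n$ choices for the source vertex $u$, $n$ choices for the destination $v$, and $T$ possible values for each of the departure time $t_1$ and the arrival time $t_2$, giving $n^2 T^2$ entries in total. The base cases established before the recurrence (namely $\hat{F}(u,u,0,0)=0$, $\hat{F}(u,v,t,0)=\infty$ for $v \neq u$, and the cases $v=u$ or $t_1=i$ with $v \neq u$) can each be written down in constant time, contributing an $O(n^2T^2)$ term that is dominated by the rest of the analysis.

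Next, I would bound the work required to compute a single non-base-case entry $\hat{F}(u,v,t_1,i)$. The recurrence established in the preceding claim expresses this value as a minimum over triples $(t', t'', v')$ with $t_1 \le t' \le t'' < i$ and $vv' \in E$. There are at most $T^2$ valid time pairs $(t',t'')$ and at most $n$ choices of neighbour $v'$, so the minimum is taken over $O(nT^2)$ candidates. For each candidate, the algorithm performs one constant-time lookup of a previously computed $\hat{F}$ value and one constant-time lookup of $F$, so each table entry is computed in $O(nT^2)$ time.

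Multiplying these two bounds yields a total running time of $O(n^2T^2 \cdot nT^2) = O(n^3T^4)$, as required. The only subtlety — the main (mild) obstacle — is ensuring that the dependency ordering used by the recurrence is respected: I would process the arrival times $i$ in increasing order, then note that the recurrence only references $\hat{F}(u,v,t_1,t')$ with $t' < i$, so by the time each entry is evaluated the required predecessor values are already in the table. Once this ordering is fixed, the argument reduces to a routine counting exercise.
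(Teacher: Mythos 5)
Your proposal is correct and follows essentially the same counting argument as the paper: $O(n^2T^2)$ table entries (equivalently, $nT^2$ entries per source), each computed as a minimum over $O(nT^2)$ candidates with constant-time lookups, giving $O(n^3T^4)$ overall. The remark about processing arrival times in increasing order is a harmless (and sensible) addition that the paper leaves implicit.
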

    
    \begin{nestedproof}
        We have $V\times [T]\times [T]$ values of $\hat{F}$ to calculate for each source. We compute each of these by taking the minimum over a set of at most $T^2n$ values, where each is calculated in constant time by adding pairs of values (where each is either already computed or provided as part of the input to the problem).  The algorithm therefore requires time $O(n^2 T^4)$ for each source, giving an overall running time of $O(n^3 T^4)$.

        This concludes the proof of the Claim.
    \end{nestedproof}
    With this final Claim proven, this then gives us the overall Lemma result. 
\end{proof}

Using Lemma \ref{lem:non-direct-travel}, we now show how to solve a variant of \textsc{CCTO} in which objects have colours, and there are restrictions on the order in which we must visit objects of different colours.  
The colourful version of our problem is defined as follows:
\begin{framed}
\noindent
\textbf{\textsc{Colourful Changing Cost Temporal Orienteering \\(CCCTO)}}\\
\textit{Input:} A temporal cost graph $(G=(V,E),F)$, a source vertex $v_s$, sink vertex $v_t$, a target value $k \in \mathbb{N}$, a colouring $c:V \setminus \{v_s,v_t\} \rightarrow [k-2]$,  and budget $f \in \mathbb{N}$.\\
\textit{Question:} Is there a valid walk in $(G,F)$ from $v_s$ to $v_t$ that visits at least one vertex of each colour and has cost at most $f$?
\end{framed}

The idea of colour coding is to repeatedly solve the colourful version of the problem on instances whose colourings are drawn from an $(n,k)$-perfect hash family.
\begin{definition}[$(n,k)$-perfect hash family]
    An $(n,k)$\emph{-perfect hash family} $\mathcal{F}$ is a family of functions from $[n]$ to $[k]$ such that for every set $S$ of size $k$ there exists a function $c$ in $\mathcal{F}$ that is injective on $S$.
\end{definition}

In general, the colour-coding method works as follows.  If we know there is a solution to some problem that contains exactly $k$ vertices, we can be sure there is at least one colouring in the family under which these $k$ vertices all receive different colours, and so this colouring will give a yes-instance for the colourful problem.  This is exactly the approach used to solve problems such as $k$-\textsc{Path}~\cite{alon_colour_1995}.  

Here there is subtle but technically important difference from this standard application of the method, because the colourful version of our problem looks for solutions using \emph{at least} one vertex of each colour, not exactly one.  This is necessary for two reasons.  First, as we have a specified sink vertex, it may be that the least costly walk from source to sink that visits at least $k$ vertices in fact visits more than $k$ vertices (and we cannot simply truncate after the first $k$).  More importantly, as we are looking for walks rather than paths, we may wish to return to previously visited vertices on the way to a new one; since we want to avoid keeping track of all vertices previously visited (this would require us to consider $n^{\Theta(k)}$ possibilities in the worst case) we cannot distinguish between revisiting a vertex we have already counted and visiting a new vertex with the same colour, so we must allow the inclusion of multiple vertices with the same colour.

We will in fact consider all possible orders in which colours might \emph{first} appear on an optimal walk meeting our requirements.  Given a permutation $\pi$ of the set $C$ of colours, we say that a valid walk $W$ \emph{respects $\pi$} if the order in which colours first appear on $W$ matches the order given by $\pi$. Our second auxiliary cost function gives us the minimum cost of a walk from the source to a vertex which arrives by a specified time and respects a specified colouring. We then solve \textsc{CCCTO} by iterating over all permutations $\pi$ and using the algorithm given in the following lemma.

\begin{lemmarep}\label{lem:ordered-path-min}
    Given a temporal cost graph $(G=(V,E),F)$ with $n$ vertices and lifetime $T$, a vertex $v_s \in V$, a colouring $c:V \rightarrow [k] \cup \{0\}$ with $c^{-1}(0) = \{v_s\}$ and a permutation $\pi'$ of $[k]$, we can find in time $O(kn^4T^4)$ the minimum cost of a valid walk that respects $\pi$, the permutation of $[k] \cup \{0\}$ obtained from $\pi$ by adding $0$ at the start. 
\end{lemmarep}
\begin{proof}
    We use a dynamic programming approach.  Before describing this approach, we need to introduce some notation.

    Throughout, we will denote the $i$th colour under $\pi$ by $c_i$; we extend this to include colour 0 by setting $c_0 = 0$.   For each $i \in [k]$, let $V_i \subseteq V$ be the set of vertices with $c(v) = i$, and for each $i,j \in [k]$ let $E_{i,j}$ be the set of edges with one endpoint of colour $i$ and one of colour $j$.  We denote by $T$ the lifetime of $(G,F)$.  Note that any valid walk respecting $\pi$ must start at $v_s$, since by assumption this is the only vertex with colour $0$.

    Our dynamic program works through colours in turn, in the order specified by $\pi$.  For each $i$ and for each $t \in T$, we will compute, for all $v \in V_i$, the minimum cost $F^*_\pi(v,t)$ of a walk from $v_s$ to $v$ that arrives by time $t$ and respects $\pi_i = c_0 c_1 c_2 \dots c_i$, with $v$ the first vertex of colour $c_i$ on the walk; we will assume we have already computed these costs for every $j<i$.  It is clear that computing these costs for $i=k$ will solve the problem (we take the minimum over all vertices in $V_k$).


    We begin by initialising $F^*_\pi(v_s,t) = 0$ for all $t$ (since the only walk satisfying the requirements is the trivial walk, which by convention has no cost).

    To compute further values of $F^*_\pi$, we will use the algorithm of Lemma \ref{lem:non-direct-travel} as a subroutine.  This allows us to compute, for any vertices $u$ and $v$, times $t_1$ and $t_2$, and set $I \subseteq [k] \cup \{0\}$, the minimum cost of a valid walk departing from $u$ at time $t_1$ and arriving at $v$ at time $t_2$, such that all vertices on the walk other than possibly $v$ receive colours from $I$ under $c$; we denote this quantity by $\hat{F}_I(u,v,t_1,t_2)$.  To compute $\hat{F}_I(u,v,t_1,t_2)$ we invoke the algorithm of Lemma \ref{lem:non-direct-travel} on the temporal cost graph induced by $\{u,v\} \cup \bigcup_{i \in I}V_i$.

    Before proceeding with our dynamic program, we precompute the values of $\hat{F}_I(u,v,t_1,t_2)$ that we may need.  Specifically, for each $u,v \in V$ and $t_1,t_2 \in [T]$, if $c(u) = c_i$ and $c(v)=c_j$ with $i<j$, we set $I = \{c_0,c_1,\dots,c_{j-1}\}$ (that is, the set of all colours appearing in $\pi$ strictly before $c(v)$), and compute and store the value $\hat{F}_I(u,v,t_1,t_2)$.  Notice that we use at most $k$ different induced subgraphs of our temporal cost graph, each of which can be constructed in time $O(n)$.  Using the algorithm of Lemma \ref{lem:non-direct-travel}, we can compute the minimum cost walk for all tuples of vertices and times for each induced subgraph in time $O(n^3T^4)$.  Overall, therefore, this precomputation step requires time $O(kn^3T^4)$.


    Equipped with these values, our dynamic program uses the values of $F^*_\pi(v,t)$ for $v \in V_{i-1}$ to compute those for $v \in V_i$.  


    Given $F^*_{\pi}$ for each vertex in $V_{i-1}$ and time $t$, we calculate $F^*_{\pi}$ for each vertex-time pair $(v_i,t)$ with $v_i \in V_i$ using the following relationship. 

    \begin{claim}
        Fix $i \geq 1$.  For all $v\in V_i$ with $t\in [T]$, 
        $$F^*_{\pi}(v_i,t) = \min \{ F^*_{\pi}(v_{i-1},t_1)+\hat{F}_{\{c_0,\dots,c_{i-1}\}}(v_{i-1},v_i,t_1,t_2): v_{i-1}\in V_{i-1}\text{ and } 1 \le t_1 < t_2\leq t \}.$$
    \end{claim}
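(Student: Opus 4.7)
The plan is to prove the claimed equality by a standard double-inequality argument: the $(\leq)$ direction is established by \emph{concatenating} walks that witness the two quantities on the right-hand side, and the $(\geq)$ direction by \emph{splitting} an optimal walk at the first vertex of colour $c_{i-1}$. The preconditions of $F^*_\pi$ on the prefix and of $\hat{F}_{\{c_0,\dots,c_{i-1}\}}$ on the suffix line up naturally with the structure of a walk that first visits every colour of $\pi_{i-1}$ and only then reaches a vertex of colour $c_i$.

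For the $(\leq)$ direction, fix any admissible tuple $(v_{i-1}, t_1, t_2)$ and assume without loss of generality that both summands are finite. By definition of $F^*_\pi(v_{i-1}, t_1)$, there is a valid walk $W_1$ from $v_s$ to $v_{i-1}$, arriving by time $t_1$, respecting $\pi_{i-1}$, with $v_{i-1}$ the first vertex of colour $c_{i-1}$ on $W_1$; by definition of $\hat{F}_{\{c_0,\dots,c_{i-1}\}}(v_{i-1}, v_i, t_1, t_2)$, there is a valid walk $W_2$ from $v_{i-1}$ to $v_i$, departing at $t_1$ and arriving at $t_2$, all of whose internal vertices carry colours in $\{c_0,\dots,c_{i-1}\}$. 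Since $W_1$ arrives at $v_{i-1}$ by $t_1$ and $W_2$ departs at $t_1$, the concatenation $W_1 \cdot W_2$ is a valid walk whose cost is the sum of the two quantities and which arrives at $v_i$ at time $t_2 \leq t$. No vertex of $W_1$ has colour $c_i$ (as $W_1$ respects $\pi_{i-1}$ and ends at $v_{i-1}$) and no internal vertex of $W_2$ has colour $c_i$, so $v_i$ is the first vertex of colour $c_i$ on $W_1 \cdot W_2$ and the order of first appearances of colours matches $\pi_i$. Minimising over $(v_{i-1}, t_1, t_2)$ yields the $(\leq)$ inequality.

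For the $(\geq)$ direction, assume $F^*_\pi(v_i, t) < \infty$ and let $W$ be a walk achieving it. Since $W$ respects $\pi_i$, some vertex of colour $c_{i-1}$ precedes $v_i$ on $W$; let $v_{i-1}$ be the first such vertex and split $W$ at the first occurrence of $v_{i-1}$ into a prefix $W_1$ ending at $v_{i-1}$ and a suffix $W_2$ beginning at $v_{i-1}$ and ending at $v_i$. Let $t_1$ be the time at which $W_2$ departs $v_{i-1}$ and $t_2$ the time at which $W$ arrives at $v_i$; then $1 \leq t_1 < t_2 \leq t$. The prefix $W_1$ reaches $v_{i-1}$ by time $t_1$, respects $\pi_{i-1}$ (as an order-preserving restriction of $W$), and has $v_{i-1}$ as its first vertex of colour $c_{i-1}$, so $\mathrm{cost}(W_1) \geq F^*_\pi(v_{i-1}, t_1)$. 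The suffix $W_2$ departs at $t_1$ and arrives at $t_2$, and all its internal vertices have colours in $\{c_0,\dots,c_{i-1}\}$ because they precede the first $c_i$-coloured vertex $v_i$ on $W$; thus $\mathrm{cost}(W_2) \geq \hat{F}_{\{c_0,\dots,c_{i-1}\}}(v_{i-1}, v_i, t_1, t_2)$. Adding the two lower bounds gives the $(\geq)$ direction.

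The main subtlety, and the point where the argument could go wrong, is the timing bookkeeping: $F^*_\pi(v,t)$ quantifies over walks arriving \emph{by} some time, whereas $\hat{F}$ is indexed by an \emph{exact} departure-arrival pair. The clean way to reconcile this in the $(\geq)$ direction is to choose the split time $t_1$ as the \emph{suffix's departure time} rather than the prefix's arrival time: this choice automatically lies in $[T]$, is strictly less than $t_2$, and lets the ``arrives by'' slackness of $F^*_\pi$ absorb any waiting between the prefix's final arrival and $t_1$, so no explicit zero-cost self-loops are needed. Edge cases such as $i=1$ (where $v_{i-1} = v_s$ is forced because $v_s$ is the unique colour-$0$ vertex) are handled without modification by the initialisation $F^*_\pi(v_s, \cdot) = 0$.
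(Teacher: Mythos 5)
Your proof is correct and follows essentially the same route as the paper's: the $(\leq)$ direction by concatenating a walk witnessing $F^*_\pi(v_{i-1},t_1)$ with one witnessing $\hat{F}_{\{c_0,\dots,c_{i-1}\}}(v_{i-1},v_i,t_1,t_2)$, and the $(\geq)$ direction by splitting an optimal walk at the first vertex of colour $c_{i-1}$, taking the split time to be the departure time of the suffix exactly as the paper does. Your remark on why the departure-time choice reconciles the ``arrives by'' semantics of $F^*_\pi$ with the exact departure--arrival indexing of $\hat{F}$ matches the paper's implicit handling of this point.
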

    \begin{nestedproof}

        We begin by showing that 
        $$F^*_{\pi}(v_i,t) \le F^*_{\pi}(v_{i-1},t_1)+\hat{F}_{\{c_0,\dots,c_{i-1}\}}(v_{i-1},v_i,t_1,t_2)$$
        for every $v_{i-1}\in V_{i-1}$ and $1 \le t_1 < t_2\leq t$.
        We assume without loss of generality that the right-hand side is finite.  By definition, since $F^*_{\pi}(v_{i-1},t_1) < \infty$, there is a valid walk from $v_s$ to $v_{i-1}$ that respects $\pi_{i-1}$ and arrives by time $t$ and has cost $F^*_{\pi}(v_{i-1},t_1)$.  Similarly, it follows from the definition and the fact that $\hat{F}_{\{c_0,\dots,c_{i-1}\}}(v_{i-1},v_i,t_1,t_2) < \infty$ that there is a valid walk from $v_{i-1}$ departing at time $t_1$ and arriving at $t_2 < t$ and using only vertices with colours in $\{c_0,\dots,c_{i-1}\}$, with cost $\hat{F}_{\{c_0,\dots,c_{i-1}\}}(v_{i-1},v_i,t_1,t_2)$.  Since the first of these walks reaches $v_{i-1}$ no later than time $t_1$ and the second departs at time $t_1$, they can be concatenated to give a valid walk from that arrives by time $t$ and has cost $F^*_{\pi}(v_{i-1},t_1)+\hat{F}_{\{c_0,\dots,c_{i-1}\}}(v_{i-1},v_i,t_1,t_2)$.  Moreover, all colours $c_0,\cdots,c_{i-1}$ appear in the first part of the path in the order given by $\pi$, and $v_i$ is the first vertex on the walk with colour $c_i$, so the walk respects $\pi_i$.  Since $F^*_{\pi}(v_i,t)$ is defined to be the minimum cost of a path with these properties, the inequality follows.

        We now complete the proof of the claim by showing that there exist $v_{i-1}' \in V_i$ and $1 \le t_1' < t_2' \le t$ such that 
        $$F^*_{\pi}(v_i,t) \ge F^*_{\pi}(v_{i-1}',t_1')+\hat{F}_{\{c_0,\dots,c_{i-1}\}}(v_{i-1}',v_i,t_1',t_2').$$
        By definition, there exists a valid walk $W'$ from $v_s$ to $v_i$ arriving by time $t$ which respects $\pi$ and has cost $F^*_{\pi}(v_i,t)$.  Since $v_i \neq v_s$ (as $i \ge 1$), the walk $W'$ most contain at least one edge.  As $W'$ respects $\pi$, it must contain some vertex $v_{i-1}' \neq v_i$ of colour $c_{i-1}$; assume without loss of generality that $v_{i-1}'$ is the first vertex of colour $c_{i-1}$ on the walk.  Let $t_1'$ be the time at which $W'$ first departs from $v_{i-1}'$, and $t_2'$ the time at which $W'$ reaches $v_i$.  We can therefore split $W'$ into two shorter valid walks $W_1'$ and $W_2'$, where:
        \begin{itemize}
            \item $W_1'$ is a valid walk from $v_s$ to $v_{i-1}'$ that arrives by time $t_1'$, respects $\pi_{i-1}$, and has the property that $v_{i-1}'$ is the first vertex of colour $c_{i-1}$ on the walk, and
            \item $W_2'$ is a valid walk from $v_{i-1}'$ to $v$ departing at time $t_1'$ and arriving at $t_2'$, that, except for $v$, uses only vertices with colours from the set $\{c_0,\dots,c_{i-1}\}$.
        \end{itemize}
        It follows that $W_1'$ has cost at least $F^*_\pi(v_{i-1}',t_1')$ and $W_2'$ has cost at least $\hat{F}_{\{c_0,\dots,c_{i-1}\}}(v_{i-1}',v_i,t_1',t_2')$; since $W'$ is the concatenation of these two walks, we see that $W'$ has cost at least $F^*_{\pi}(v_{i-1}',t_1')+\hat{F}_{\{c_0,\dots,c_{i-1}\}}(v_{i-1}',v_i,t_1',t_2')$ as required.

    \end{nestedproof}
    To complete the proof of the lemma, it remains only to bound the running time of the dynamic programming algorithm.
    \begin{claim}
        The algorithm computes $F^*_{\pi}(v,t)$ for all $v \in V$ and $t \in [T]$ in time $O(kn^3T^4)$. 
    \end{claim}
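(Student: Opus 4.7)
The plan is to split the running time into two contributions, the precomputation of the auxiliary values $\hat{F}_I$ and the main dynamic program, and to show that the former dominates.

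First, I would handle the precomputation phase. As described above, the values we may need are $\hat{F}_{\{c_0,\dots,c_{j-1}\}}(u,v,t_1,t_2)$ for each pair $(u,v)$ with $c(v)=c_j$ and all times $t_1,t_2$. There are only $k$ distinct colour-prefix sets $\{c_0,\dots,c_{j-1}\}$ that appear, so we construct at most $k$ induced temporal cost subgraphs; building each one takes $O(n)$ time, and invoking Lemma~\ref{lem:non-direct-travel} on each one produces all the required $\hat{F}$ values in $O(n^3 T^4)$ time. The total cost of precomputation is therefore $O(kn^3 T^4)$.

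Next, I would analyse the dynamic program itself. After initialising $F^*_{\pi}(v_s,t)=0$ in $O(T)$ time, we process the colours in the order given by $\pi$. For each $i\in[k]$, each $v_i\in V_i$, and each $t\in[T]$, the recurrence in the preceding Claim computes $F^*_{\pi}(v_i,t)$ as a minimum over the $|V_{i-1}|$ choices of $v_{i-1}$ together with the $O(T^2)$ pairs $(t_1,t_2)$ with $1\le t_1<t_2\le t$; with the precomputed $\hat{F}$ table and the previously computed values $F^*_{\pi}(\cdot,t_1)$ available as constant-time look-ups, each term of the minimum is evaluated in $O(1)$. Hence the work for colour~$c_i$ is $O(|V_i|\cdot T\cdot |V_{i-1}|\cdot T^2)=O(|V_i||V_{i-1}|T^3)$.

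The final step is to sum over $i$. Using $|V_i||V_{i-1}|\le \tfrac{1}{2}(|V_i|^2+|V_{i-1}|^2)$ and the crude bound $\sum_{i}|V_i|^2\le (\sum_i |V_i|)^2=n^2$ (the sets $V_i$ partition a subset of $V$), we obtain
\[
\sum_{i=1}^{k} |V_i||V_{i-1}|\,T^3 \;\le\; n^2 T^3,
\]
so the whole DP phase runs in $O(n^2 T^3)$ time. Combining this with the $O(kn^3 T^4)$ precomputation gives an overall running time of $O(kn^3 T^4)$, as claimed. The only mildly subtle point is the summation step above; the rest is routine bookkeeping over the recurrence.
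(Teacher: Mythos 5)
Your proposal is correct and takes essentially the same approach as the paper: the precomputation of the $\hat{F}_I$ values dominates at $O(kn^3T^4)$, while the dynamic-programming phase contributes only $O(n^2T^3)$. The paper reaches the same DP bound slightly more directly, noting there are at most $nT$ pairs $(v,t)$ in total (since the $V_i$ partition the vertices), each requiring a minimum over at most $nT^2$ constant-time terms, whereas you account per colour and sum; both are valid and yield the identical bound.
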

    \begin{nestedproof}
        Recall that we begin with a precomputation step in which we compute values of $\hat{F}_I$, which takes time $O(kn^3T^4)$.  Initialising the values of $F^*_\pi$ for $v=v_s$ takes time $O(T)$.  We then consider each $i$ from $1$ to $k$ in turn, and calculate $F^*_\pi(v,t)$ for each $v \in V_i$ and $t \in [T]$.  For each combination of $v$ and $t$ (of which there are at most $nT$) we need to take the minimum over at most $nT^2$ values (considering all possibilities for $v_{i-1}$, $t_1$ and $t_2$), where each of these values is computed in constant time by adding previously computed values.  The total time required by the algorithm is therefore $O(kn^3T^4) + O(n^2T^3) = O(kn^3T^4)$.
    \end{nestedproof}
    Since we have shown that $F^*_{\pi}$ gives the minimum cost of a walk which respects the ordering $\pi$, $F^*_{\pi}(v_T,T)$ where $v_T$ is the sink must be the minimum cost of a walk traversing at least $k$ vertices in the order given by $\pi$.
\end{proof}
   
We now use Lemma \ref{lem:ordered-path-min} to solve the colourful version of our problem by trying each possible order in which the colours can first appear on the walk.

\begin{theoremrep}\label{thm:colour-solve}
    Given a temporal cost graph $(G,F)$ with $n$ vertices and lifetime $T$, we can solve \textsc{CCCTO} on $(G,F)$ in $O(k!n^4T^4)$ time.
\end{theoremrep}
\begin{proof}
    Let $\left((G=(V,E),F),v_s,v_t,k,c,f\right)$ be the input to an instance of \textsc{CCCTO}.  We begin by extending the colouring $c$ to $v_s$ and $v_t$ by setting $c(v_s) = 0$ and $c(v_t) = k-1$.  It is clear that there is a valid walk in $(G,F)$ from $v_s$ to $v_t$ that visits at least one vertex of each colour in $[k-2]$ and has cost at most $f$ if and only if there exists a permutation $\pi_1$ of $[k-2]$ such that, if $\pi$ is the permutation of $[k-1] \cup \{0\}$ obtained from $\pi'$ by inserting $0$ at the beginning and $k-2$ at the end, there is a valid walk in $(G,F)$ of cost at most $f$ that respects $\pi$.

    We therefore solve the problem by iterating over all possibilities for $\pi_1$ (of which there are exactly $(k-2)!$), construct the permutation $\pi_2$ of $[k-1]$ by appending $k-1$, and run the algorithm of Lemma \ref{lem:ordered-path-min} on input $((G,F),v_s,c,\pi_2)$.  If the result returned for any choice of $\pi_1$ is at most $f$ then we conclude we have a yes-instance; otherwise we conclude that we have a no-instance.

    The running time bound follows immediately from Lemma \ref{lem:ordered-path-min}.
\end{proof}

We now show how to use our algorithm for \textsc{CCCTO} to solve the uncoloured version of the problem.  We will make use of the following result about $(n,k)$-perfect families of hash functions, due to Naor et al. \cite{naor95}.
\begin{theorem}[\cite{naor95}]\label{thm:hash-family}
    For any $n$, $k\geq 1$, an $(n,k)$-perfect hash family of size $e^k k^{O(\log k)}\log n$ can be found in time $e^k k^{O(\log k)}n\log n$.
\end{theorem}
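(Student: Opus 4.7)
The plan is to establish this via a two-stage construction that separates the dependence on $n$ from the dependence on $k$. The starting observation is probabilistic: for a uniformly random function $c : [n] \to [k]$ and any fixed $k$-subset $S$, the probability that $c$ is injective on $S$ is $k!/k^k = \Theta(e^{-k}\sqrt{k})$ by Stirling. A standard union bound over all $\binom{n}{k}$ subsets then shows that a random family of $\Theta(e^k \sqrt{k}\cdot k\log n)$ such functions is $(n,k)$-perfect with high probability. This gives the existential statement but no efficient construction, so the real content is the derandomisation.

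To derandomise, I would compose two ingredients. First, construct a \emph{splitter}: a family of functions from $[n]$ to $[k^2]$ of size $k^{O(1)}\log n$ such that for every $k$-subset $S$ of $[n]$ at least one function in the family is injective on $S$. Such splitters can be built deterministically from $2$-universal hash families (each function is collision-free on $S$ with probability at least $1 - \binom{k}{2}/k^2 \geq 1/2$, so $O(k\log n)$ functions suffice to cover every $k$-subset) and made explicit via polynomial hashing with carefully chosen prime moduli. Second, on the small universe $[k^2]$, build a $(k^2, k)$-perfect hash family of size $e^k k^{O(\log k)}$; because the universe has size only polynomial in $k$, one can derandomise the random-colouring argument here by exhaustive enumeration combined with the method of conditional expectations, or with explicit constructions based on error-correcting codes.

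The overall family consists of all compositions $g \circ f$ of a splitter $f$ with a small-universe perfect hash $g$. Correctness is immediate: if $f$ is injective on $S \subseteq [n]$, then $f(S)$ is a $k$-subset of $[k^2]$, and some $g$ in the second family maps it bijectively to $[k]$, making $g \circ f$ injective on $S$. The total family size is $k^{O(1)}\log n \cdot e^k k^{O(\log k)} = e^k k^{O(\log k)} \log n$ as required, and the construction time is dominated by enumeration on the small universe, giving the claimed bound of $e^k k^{O(\log k)} n \log n$ (the extra $n$ factor coming from writing down each function's truth table on $[n]$ via the splitter).

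The main obstacle is obtaining a splitter whose size depends only logarithmically on $n$ while keeping the dependence on $k$ mild enough to be absorbed into the $k^{O(\log k)}$ factor. A naive universal-hashing argument gives the right dependence on $n$ but loses additional polynomial factors in $k$, and a direct probabilistic construction cannot be derandomised without this universe-reduction step. This trade-off is precisely where the Naor--Schulman--Srinivasan machinery is needed, and it is why the result is stated as a black box rather than reproved here.
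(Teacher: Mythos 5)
This statement is not proved in the paper at all: it is imported verbatim from Naor, Schulman and Srinivasan and used as a black box, so there is no in-paper argument to compare your attempt against. Your outline is, in spirit, exactly the standard NSS strategy -- a probabilistic existence bound, universe reduction from $[n]$ to $[k^2]$ via a splitter family of size $k^{O(1)}\log n$, a perfect family on the small universe, and composition -- and your correctness argument for the composition is fine. You also rightly flag that the real content is the derandomisation, which you defer to the cited source, just as the paper does.

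One concrete caveat if you ever wanted the sketch to be self-contained: the small-universe stage as you describe it would not deliver the stated time bound. Exhaustive enumeration combined with conditional expectations over functions $[k^2]\to[k]$ must in particular account for the roughly $\binom{k^2}{k}=k^{\Theta(k)}$ many $k$-subsets of $[k^2]$ (and, done naively, a candidate space of size $k^{k^2}$), so that stage alone costs $e^{\Theta(k\log k)}$ time, which is strictly larger than the claimed $e^k k^{O(\log k)}$ factor. Naor et al.\ avoid this by a further splitting step: $[k^2]$ is split into about $k/\log k$ blocks so that exhaustive search is only ever carried out on pieces of size $O(\log k)$, and it is precisely this step that produces the $k^{O(\log k)}$ term in both the size and the construction time. (For the purposes of this paper your weaker variant would in fact suffice, since the overall running time in Theorem~\ref{thm:fpt_k} is $k^{O(k)}T^4n^4\log n$ anyway, but it would not prove the theorem as stated.)
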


We can now prove Theorem \ref{thm:fpt_k}.

\begin{proof}[Proof of Theorem \ref{thm:fpt_k}]
    Given input $\left((G=(V,E),F),v_s,v_t,k,f\right)$ to \textsc{CCTO}, our algorithm proceeds as follows.
    \begin{enumerate}
        \item Compute an $(n-2,k)$-perfect hash family $\mathcal{F}$, using the algorithm of Theorem \ref{thm:hash-family}.
        \item For each function $g \in \mathcal{F}$, let $c_g$ be the corresponding colouring of $V \setminus \{v_1,v_2\}$.  Run the algorithm of Theorem \ref{thm:colour-solve} on the instance $\left((G,F),v_s,v_t,k,c_g,f\right)$ of \textsc{CCCTO}; if this algorithm returns YES, we return YES and halt.
        \item Return NO.
    \end{enumerate}
    It is immediate that, if the algorithm returns YES, we have a yes-instance: \textsc{CCCTO} simply introduces additional constraints compared with \textsc{CCTO}, so if we have a yes-instance to the former with any colouring this certainly corresponds to a yes-instance for the latter.  Conversely, suppose that $\left((G=(V,E),F),v_s,v_t,k,f\right)$ is a yes-instance of \textsc{CCTO}, and let $W$ be a walk witnessing this fact.  Let $v_1,\dots,v_{k-2}$ be the first $k-2$ vertices other than $v_s$ and $v_t$ on this walk (and note that $k-2$ such vertices must exist, since otherwise it would not visit the requisite number of vertices).  By definition of an $(n,k)$-perfect hash family, there must be some $g \in \mathcal{F}$ such that $c_g$ assigns a different colour to each of $v_1,\dots,v_{k-2}$.  It follows that $\left((G=(V,E),F),v_s,v_t,k,c_g,f\right)$ is a yes-instance for \textsc{CCCTO}, so our algorithm will return YES in the iteration corresponding to $c_g$.

    We now consider the running time.  Step 1 takes time $e^k k^{O(\log k)}n\log n$ by Theorem \ref{thm:hash-family}.  We repeat Step 2 a total of $e^k k^{O(\log k)}\log n$ times, and each repetition takes time $O(k!n^4T^4)$ by Theorem \ref{thm:colour-solve}.  Step 3 takes constant time.  The overall running time is therefore $k^{O(k)}T^4n^4\log n$.
\end{proof}

Note that if we assume that all costs are integral and positive, the number of objects we visit cannot exceed the budget by more than one. This gives us the following corollary.
\begin{corollary}\label{thm:fpt_fuel}
    If all costs assigned by $F$ are integral and positive, \textsc{CCTO} is in FPT with respect to $f$, the budget of fuel.
\end{corollary}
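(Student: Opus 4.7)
The plan is very short: reduce to Theorem~\ref{thm:fpt_k} by bounding $k$ in terms of $f$, and then invoke it as a black box.

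First I would make the key observation that, under the hypothesis that every finite value $F(u,v,t_1,t_2)$ with $u \neq v$ is a positive integer (which is exactly the scenario assumed), any valid walk of total cost at most $f$ can contain at most $f$ quadruples with $u_i \neq v_i$; each such quadruple contributes at least $1$ to the cost, while waiting at a vertex is free. Consequently, any witnessing walk visits at most $f+1$ distinct vertices (counting the source), since each newly visited vertex requires an additional non-trivial quadruple.

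Next I would branch on the relationship between $k$ and $f$. If $k > f+1$, then by the observation above no walk of cost at most $f$ can visit $k$ distinct vertices, so the algorithm immediately returns NO in constant time. Otherwise $k \leq f+1$, and I would run the algorithm of Theorem~\ref{thm:fpt_k} on the instance $((G,F), v_s, v_t, k, f)$, returning its answer.

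Correctness is immediate from Theorem~\ref{thm:fpt_k} together with the branching argument. For the running time, the first case is trivial, and in the second case Theorem~\ref{thm:fpt_k} runs in time $k^{O(k)} T^4 n^4 \log n \leq (f+1)^{O(f+1)} T^4 n^4 \log n$, which is of the form $g(f) \cdot \mathrm{poly}(n,T)$ for a computable function $g$. Hence \textsc{CCTO} is in FPT parameterised by $f$. There is no real obstacle here: the whole argument rests on noticing that integrality and positivity of costs force $k \leq f+1$ on every yes-instance, after which Theorem~\ref{thm:fpt_k} does all the work.
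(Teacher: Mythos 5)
Your proposal is correct and follows the same route the paper takes: observe that positive integral costs force any walk of cost at most $f$ to visit at most $f+1$ distinct vertices, so either $k > f+1$ (trivially NO) or $k \le f+1$ and the algorithm of Theorem~\ref{thm:fpt_k} runs in time of the form $g(f)\cdot\mathrm{poly}(n,T)$. Nothing is missing.
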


\section{Conclusion and Future Work}
We have introduced a generalisation of temporal graphs, named temporal cost graphs and used this to define the problem \textsc{Changing Cost Temporal Orienteering}. We have shown CCTO to be tractable when the underlying graph is a tree and the maximum traversal number of any edge under the function $F$ is restricted. We also show tractability when vertex-interval-traversal width, or number of vertices to be visited $k$ are restricted. These restrictions may not be reasonable in a real-world setting and so a natural step for future work could include an investigation of parameters which both give us tractability and are bounded in real-world applications of this model.  In particular, these could arise from the complex but periodic nature of distances between orbiting objects. 

Another possible avenue for future work is to investigate approximation algorithms for CCTO. To the best of the authors' knowledge, the closest problem to CCTO for which there exists an approximation algorithm is Time-Dependent Orienteering, for which there exists a $(2+\varepsilon)$-approximation algorithm~\cite{FL02}. Their algorithm cannot be simply adapted to solve CCTO because their problem does not allow vertices to be revisited in a solution. Another possibility would be to start from the algorithm given by Chekuri et al.~\cite{chekuri2012improved} for the Orienteering problem where walks are allowed. However, this will likely require a significantly worse approximation factor or an exponential runtime caused by the addition of the fuel cost function to the problem.

        \begin{credits}
\subsubsection{\ackname} Timothée Corsini is supported by ANR TEMPOGRAL ANR-22-CE48-0001.
 Jessica Enright and Kitty Meeks are supported by EPSRC grant EP/T004878/1. For the purpose of open access, the author(s) has applied a Creative Commons Attribution (CC BY) licence to any Author Accepted Manuscript version arising from this submission. 

\subsubsection{\discintname}
The authors have no competing interests to declare. 
\end{credits}

\appendix
\bibliographystyle{splncs04}
 \bibliography{references}
\end{document}